\pdfoutput=1
\documentclass{article}

\usepackage{ifthen}
\newboolean{arxiv}
\setboolean{arxiv}{true}

\usepackage{amsmath,amssymb}
\usepackage{algorithm}
\usepackage{algpseudocode}
\usepackage{enumitem}
\usepackage{booktabs}

\algrenewcommand\algorithmicindent{0.6em}

\ifthenelse{\boolean{arxiv}}{
  \newtheorem{theorem}{Theorem}
  \newtheorem{corollary}{Corollary}
  \newtheorem{constraint}{Constraint}
  \newtheorem{technique}{Technique}
  \newtheorem{implication}{Implication}
  
  \newtheorem{proof}{Proof}
}{
  \newtheorem{constraint}{Constraint}
  \newtheorem{technique}{Technique}
  \newtheorem{implication}{Implication}
}

\newcommand{\before}{\sqsubseteq_\textnormal{before}}

\title{Stop talking to me---a communication-avoiding ADER-DG
realisation\thanks{ The underlying project has received funding from the European Union's Horizon
 2020 research and innovation programme under grant agreement No 671698
 (ExaHyPE). All software is freely available from \url{www.exahype.eu}.
}} 


\author{D.E.~Charrier\footnotemark[2]
\and T.~Weinzierl\footnotemark[2]}

\ifthenelse{\boolean{arxiv}}{
  \usepackage{hyperref}
  \usepackage{graphicx}
  \usepackage{geometry}
  \geometry{left=4.1cm,textwidth=14.2cm,top=2cm,textheight=24cm}
}{

\slugger{sisc}{201x}{xx}{x}{x--x}
}

\pagestyle{myheadings}
\thispagestyle{plain}
\markboth{D.E.~Charrier and T.~Weinzierl}{Efficient ADER-DG}

\begin{document}

\maketitle

\begin{abstract}
We present a communication- and data-sensitive formulation of
ADER-DG for hyperbolic differential equation systems.
Sensitive here has multiple flavours:
First, the formulation reduces the persistent memory footprint. 
This reduces pressure on the memory subsystem.
Second, the formulation realises the underlying predictor-corrector scheme with
single-touch semantics, i.e.~each degree of freedom is read on average only once
per time step from the main memory.
This reduces communication through the memory controllers.
Third, the formulation breaks up the tight coupling of the explicit
time stepping's algorithmic steps to mesh traversals. 
This averages out data access peaks.
Different operations and algorithmic steps are ran on different grid entities.
Finally, the formulation hides distributed memory data transfer behind the
computation aligned with the mesh traversal.
This reduces pressure on the machine interconnects.
All techniques applied by our formulation are elaborated by means of a rigorous
task formalism.
They break up ADER-DG's tight causal coupling of compute steps
and can be generalised to other predictor-corrector schemes.

\end{abstract} 

\ifthenelse{\boolean{arxiv}}{

}
{
 \begin{keywords}
 ADER-DG, predictor-corrector, task-based parallelism,
 software idioms, adaptive mesh refinement, communication-avoiding techniques 
 \end{keywords}

 \begin{AMS}
 68W40, 65Y20, 68W10
 \end{AMS}


 \footnotetext[2]{
  Department of Computer Science, Durham University, 
  Lower Mountjoy South Road, Durham DH1 3LE, United Kingdom 
  (\email{tobias.weinzierl@durham.ac.uk})
 }
}

\section{Introduction}

%
%
Hyperbolic equation systems in their first order formulation
\begin{equation}
  \frac{\partial Q}{\partial t}
 +
 \nabla\cdot F(Q)
 +
   \sum_{i=1}^d B_i(Q) \frac{\partial Q}{\partial_{x_i}}
 = 
   S(Q)
 +
    \sum \delta
    \qquad
    \mbox{with}\ 
    Q: \mathbb{R}^{d+1} \mapsto  \mathbb{R}^m
 \label{equation:introduction:PDE}
\end{equation}

\noindent
describe important wave phenomena from science and engineering. 
$d \in \{2,3\}$ is the spatial dimension supplemented by time.
For $m=1$ equation (\ref{equation:introduction:PDE}) is scalar.
In the ExaHyPE project \cite{Software:ExaHyPE} underlying this work, we focus on
astrophysical and seismic phenomena:
We want to obtain a better understanding of the long-range behaviour of
earthquakes impacting critical infrastructure---this underlies seismic risk
assessment---and we want to virtually search for gravitational waves emitted
by rotating binary neutron stars or black holes.
Advance in both fields hinges on better and faster numerical tools, such that we
can increase the level of detail of the simulation, the time spans we study and
the overall domain under examination.
Solutions to (\ref{equation:introduction:PDE}) are often characterised by features spanning
multitudes of scales in space that appear, move and disappear.
Numerical observations hence have to cover large scales both in space and
time, while an efficient numerical method to solve
(\ref{equation:introduction:PDE}) has to be accurate, it requires a mesh that is dynamically adapted
to localised features, and it requires an implementation delivering MFLOPS/s.

%
%
Explicit ADER-DG \cite{Dumbser:14:Posteriori,Titarev:02:ADER} on adaptive Cartesian meshes
derived from spacetrees \cite{Weinzierl:11:Peano,Weinzierl:17:Peano}
promisingly candidates as tool as it yields very high spatial and temporal
accuracy.
It is a discontinuous Galerkin scheme and thus fits straightforwardly to
nonconformal adaptive mesh refinement (AMR).
In ADER-DG's mindset, a weak formulation in space and time yields a
prediction of how the solution would evolve within each cell if we were
allowed to neglect the impact of neighbouring cells.
This is called predictor step. 
The arising jumps along cell interfaces then are subject to a 
Riemann solver.
In a final wrap-up, ADER-DG combines the predictor's solution with the Riemann
solve to obtain the solution at the subsequent time step.
It is an explicit predictor-corrector scheme. 
The present paper studies for the first time ADER-DG's data storage, data
movement and data exchange characteristics in detail.
We expect future machines to be equipped with enormous compute power.
Transferring data from the memory into the chip and exchanging data between
cores and nodes thus will become a severely limiting factor as these data
movements require energy \cite{Dongarra:14:ApplMathExascaleComputing}.
Our paper therefore uncovers both ADER-DG's potential and fundamental challenges
w.r.t.~supercomputing.
From hereon, it introduces
techniques that make ADER-DG implementations single-touch, i.e.~read
each unknown only once per time step.
It analyses whether and how the scheme becomes fit for exascale computing where
it is key to reduce data movements.

For our applications, $\delta: \mathbb{R}^{d+1} \to
\mathbb{R}^{m}$ in (\ref{equation:introduction:PDE}) models
earthquake point sources while non-conservative operators $B_i(Q) :
\mathbb{R}^{d+1} \to \mathbb{R}^{m\times m}$ anticipate material parameter
changes.
In astrophysics, we find
the $B_i$ modelling space-time curvature
and the conservative flux $F(Q) : \mathbb{R}^{d+1} \to \mathbb{R}^{m\times
d}$ modelling the relativistic evolution of magnetic material.
Yet, the generality of the present paper allows us to transfer all insight
to any other wave equation that can be phrased in the form of
(\ref{equation:introduction:PDE}).
The present discussion is not restricted to the two ExaHyPE application areas
only.
The rigour and generality of
the proposed analysis and techniques furthermore imply that our insights
can be transferred to many other predictor-corrector schemes;
notably Finite Volumes which is a special case of our scheme.
Finally, we detail and generalise the concept of communication-avoiding
algorithms in the present manuscript and classify aspects of ADER-DG within
this terminology.

%
%
The studied ADER-DG schemes rely on high order polynomials $p\in
\{2,\ldots,9\}$ to span the solution in space and time.
Despite the high-order, ADER-DG remains a one step
scheme---one triad of prediction, Riemann solve and correction yields the
next time step's solution---where overall high order convergence is
experimentally demonstrated \cite{Dumbser:06:ADERDG,Gassner:11:ExplicitDG}.
Once ADER-DG is combined with a Finite Volume (FV) limiter
\cite{Dumbser:14:Posteriori} it becomes robust despite the presence of shocks:
the high order representation is temporarily and locally replaced with FV on a regular Cartesian patch.
Particular characteristics of ADER-DG render it a particular promising candidate
for high-performance computing.
Yet, the very same characteristics pose realisation challenges
w.r.t.~current and future supercomputers
\cite{Dongarra:14:ApplMathExascaleComputing}.
First, ADER-DG's high order polynomials spanning both space and time imply that
each grid cell carries a significant number of degrees of freedom.
Taking a cell from the main memory and writing its updates back to memory thus
is bandwidth-demanding.
Parts of these data are even required three times per time
step though the Riemann solve and the correction typically have low arithmetic
intensity.
Second, ADER-DG's single-step character involving only one Riemann solve per
time step implies that solution and (normal) flux jump between the predicted
cell solutions have to be exchanged only once per time step.
Two strong synchronisation points per time step (the exchange of the
Riemann input data plus the eigenvalues determining the CFL
condition) plus the fact that each Riemann solve per se is computationally
cheap however render the algorithmic blueprint latency- and load
balancing-sensitive.
Finally, modifications of the grid, the spatial discretisation paradigm and the
possibility to choose a unique time step size per cell after each and every time
step fit seamlessly into the discontinuous Galerkin paradigm. 
Conforming or balanced grids \cite{Sundar:08:BalancedOctrees} and coordinated
time step sizes on cells are not inherently required, while the fallback to FV injects
robustness.
Yet, this flexibility modifies the data flow pattern all the time, and it
renders the cost per cell difficult to predict.
The three properties face a machine generation where bandwidth is a scarce
resource, rigorous synchronisation struggles to scale and runtime and latency
are sensitive to communication pathways
\cite{Dongarra:14:ApplMathExascaleComputing}.

%
%
We propose to cast ADER-DG into a task language where each
algorithmic step (predictor, Riemann, corrector) defines a set of tasks of one
type.
Temporal dependencies between the steps respective task types then are
given by ADER-DG.
Our grid instantiates the tasks plus their dependencies.
It yields a partial order.
While task-based formalisms are well-established, we are not aware of any
similar formalism of our predictor-corrector scheme that clearly distinguishes task
types, task dependencies, and task graph instantiation.
A distinction however is important:
Our grid and, hence, our task 
graph may change in each time step.
Since the partial order on the tasks does not impose a spatial order on the
cells, since an assembly of any task graph is unnecessary if a
mesh already encodes it, and since this task graph is subject to frequent changes
and thus would be expensive to assemble and maintain,
we make the grid traversal
itself issue the tasks. 
We work task assembly-free.
Our work shows that it is reasonable to shift all tasks along the simulation
time axis;
some even by half a grid sweep which makes the very first traversal trigger
only half of the actions required to complete one time step.
As long as the CFL constraints evolve smoothly, this allows us to rewrite
ADER-DG optimistically with single-touch semantics \cite{Weinzierl:17:BoxMG}.
Each cell's data are only read/written once per time step.
The approach furthermore allows us to hide all distributed memory data exchange
behind the actual grid traversal and the
other, expensive tasks.
If time steps evolve non-smoothly or the limiter kicks in,
the data access cost of our scheme double as we have to roll the solution
partially back.
Yet, they still remain better than for a
straightforward predictor-corrector implementation which maps each algorithmic step onto one
grid sweep.


The remainder is organised as follows: 
We first contextualise our work w.r.t.\, communication-avoiding algorithms 
(Sect.~\ref{section:communication-avoiding}).
Next, we revise ADER-DG briefly and clarify which mesh data
structures we rely on (Sect.~\ref{section:aderdg}), before we state our main
achievements (Sect.~\ref{section:theorems}).
In Sect.~\ref{section:tasks}, we break down the algorithm's steps into tasks
and discuss techniques how to rearrange those tasks to make them
communication-avoiding.
We continue with an analysis of the arising realisation (Sect.~\ref{section:properties}) which predicts to which degree our techniques
make the algorithm's implementation bandwidth- and memory bus-modest and thus 
fit to our notion of communication-avoiding.
We continue with experimental results before we close the
discussion in Sect.~\ref{section:conclusion}.

\section{Communication-avoiding algorithms}
\label{section:communication-avoiding}

%
%
Our contribution is the rewrite of ADER-DG in terms of a communication-avoiding
algorithm expressed in a task language. Our notion 
of com\-muni\-cation\--avoiding generalises the classic term 
(\cite{Demmel:12:CommunicationAvoiding} and references therein) as it comprises
\begin{enumerate}[leftmargin=*]
  \item the elimination of data transfer volume
  \cite{Ballard:14:CommunicationAvoidingFac,Demmel:12:CommunicationAvoiding}.
  While
  we elaborate appropriate techniques in
  \cite{Eckhardt:16:SPH,Weinzierl:17:BoxMG} and clarify in the present
  manuscript how they integrate seamlessly into our realisation, the major
  contribution here is to reduce ADER-DG's memory footprint from a space-time
  footprint into a $d$-dimensional footprint.
  \item the elimination of the data transfer frequency. We fuse two
  data exchange steps into one data exchange and restrict ourselves to
  single-touch algorithms where each data item is read and written only once \cite{Reps:15:Helmholtz,Weinzierl:17:BoxMG}. 
  Only in few cases, our approach requires redundant computations.
  \item the homogenisation of data transfer. We ensure that the demand for data
  does not fluctuate significantly over the compute time. Notably, communication
  bursts are avoided. This ensures that data transfer facilities are not idle
  over long time spans and reduces time intervals when they are
  oversubscribed.
  \item the overlapping of data transfer with communication
  \cite{Demmel:12:CommunicationAvoiding,Ghysels:13:HideLatency,Jabbar:16:CommunicationAvoiding}.
  This avoids that the code bumps into waits as incoming data
  are, in the best case, already available when they are needed.
  This property results from a reordering of algorithmic steps such that
  different arithmetic operations are localised and thus, once completed,
  all affected data can travel through interconnects until the next grid
  sweep hits a mesh cell again.
  \item the localisation of data transfer
  \cite{Jabbar:16:CommunicationAvoiding}. Explicit time stepping natively
  exchanges data only between neighbouring cells and thus tickboxes this rubric.
  Yet, our contribution also
  covers temporal data access proximity, i.e.~shared memory data transfer and
  cache effects \cite{Kowarschik:03:CacheOverview}. Activities following each
  other work on spatially near data and data is not temporarily filed in main memory.
\end{enumerate}

\noindent
We consider all communication-avoiding
techniques not only to affect distributed memory but
also to apply to memory access and multicore communication.
While we derive our communication-avoiding techniques for ADER-DG and FV,
the ideas should apply to many predictor-corrector schemes.
The above
classification is generic.

\section{ADER-DG on spacetree meshes}
\label{section:aderdg}

\begin{figure}[htb]
  \begin{center}
    \includegraphics[width=0.46\textwidth]{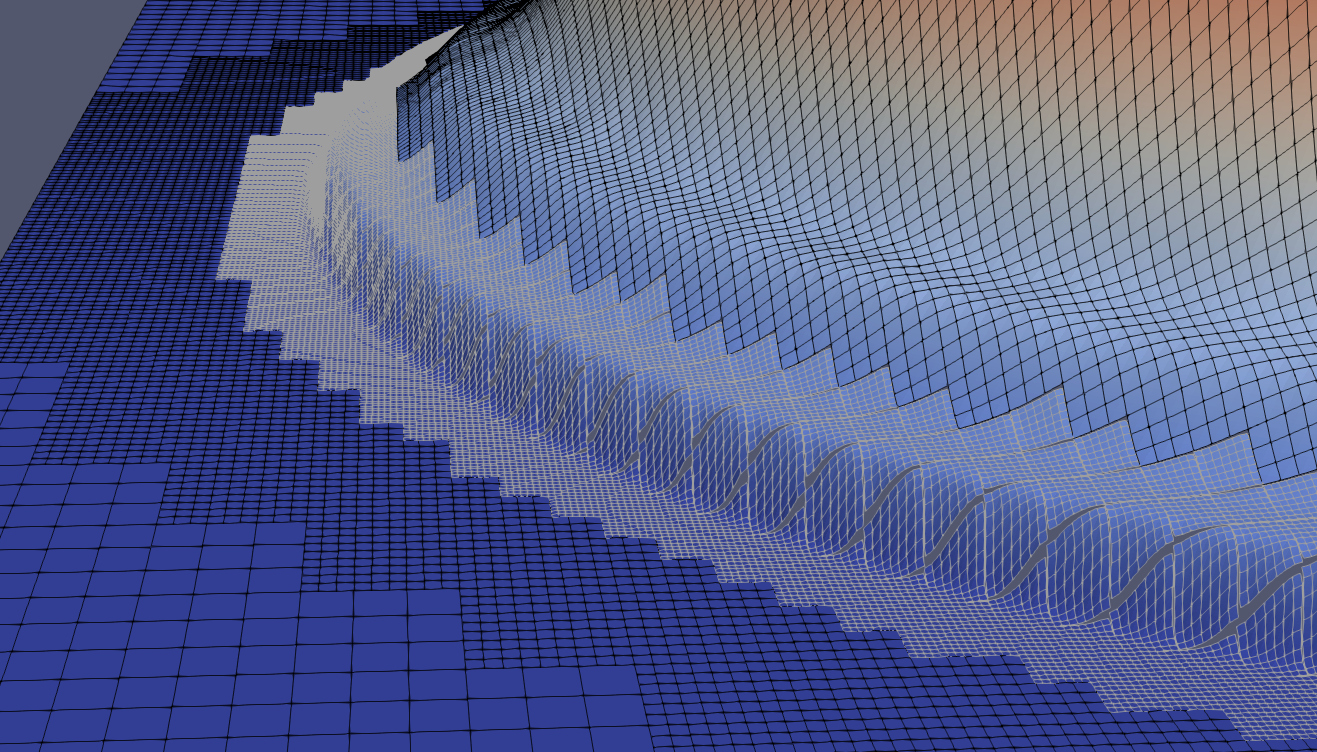}
    \includegraphics[width=0.33\textwidth]{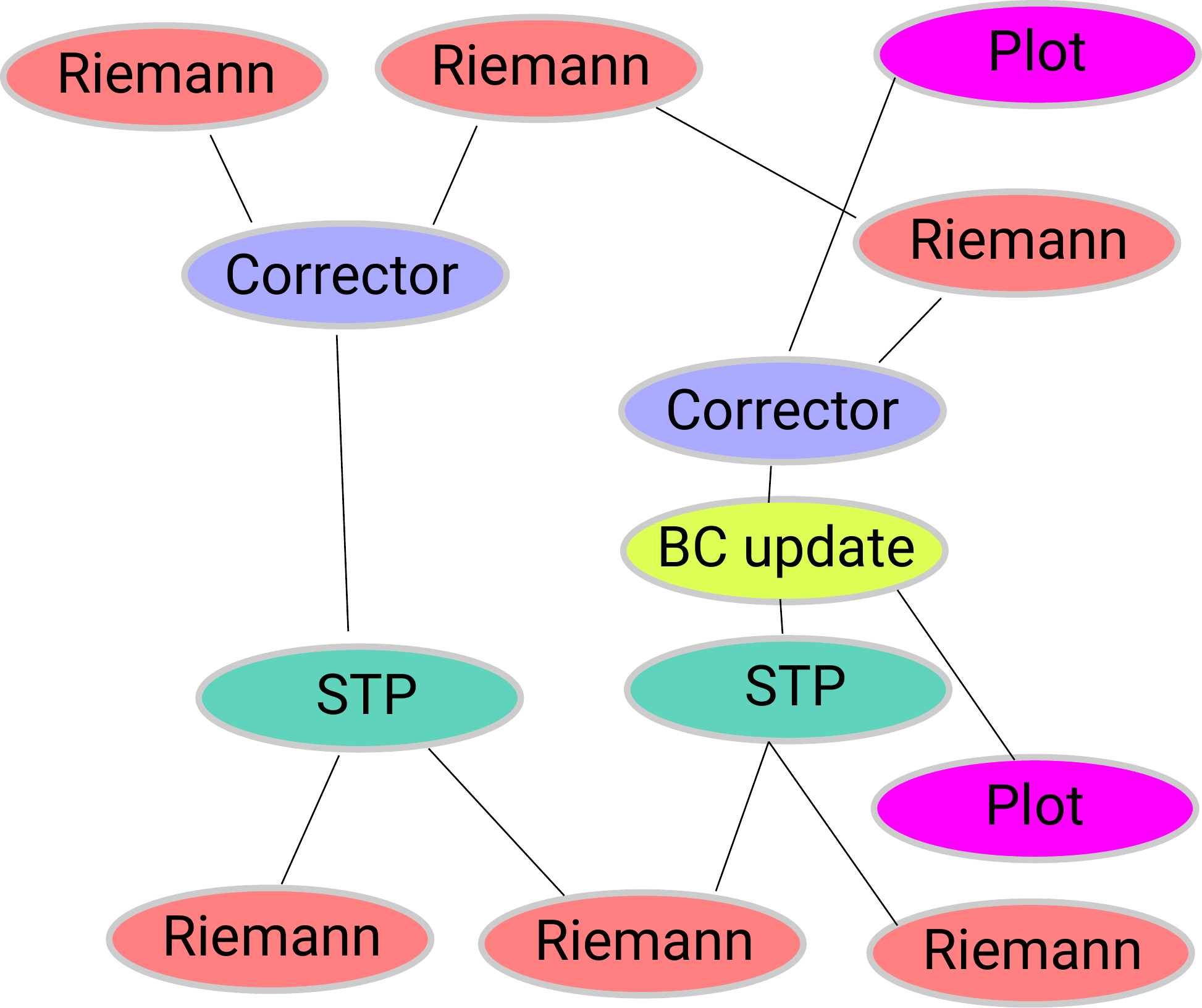}
  \end{center}
  \caption{
    Left: Adaptive Cartesian grid of a $2d$ simulation where the
    solution is extruded in the third dimension.
    Right: Characteristic ADER-DG task graph.
  }
  \label{figure:aderdg:eyecatcher}
\end{figure}

%
%
Our grids arise from a standard spacetree construction scheme based upon
tripartitioning \cite{Weinzierl:11:Peano}.
We embed the computational domain into a square or cube, respectively, and cut
this cube equidistantly into three parts.
This yields nine or 27, i.e.~$3^d$, new hypercubes with $d$ being the spatial
dimension.
We decide per hypercube whether to continue recursively, and thus end up with an
adaptive Cartesian grid (Fig.~\ref{figure:aderdg:eyecatcher}) consisting of a set $\mathbb{T}$ of squares or cubes, respectively.
We call them \textit{cells}.

%
%
Each cell $c\in\mathbb{T}$ within the adaptive Cartesian grid holds a
part of the ADER-Dis\-continuous Galerkin ($DG$) solution $Q_h(T)$.
We store $m \cdot (p+1)^{d}$ coefficients per DG cell.
They are used to cellwisely span a $p$th order polynomial modelling
all $m$ solution components from (\ref{equation:introduction:PDE}).
The polynomials are expressed via a tensor-product Lagrange basis with
Gauss-Legendre collocation points. The support of the basis
functions is limited to the interior of $c$.
Values of polynomials along cell faces, i.e.~extrapolated from one cell side or
the other, usually differ---which renders the global solution discontinuous
in space.

%
%
The following recapitulation of ADER-DG's algorithmic steps
neglects sources and non-con\-servative terms for
simplicity.
We start from a weak formulation of (\ref{equation:introduction:PDE}) and 
discretise it in space.
Let our weak formulation be a Ritz-Galerkin formulation, i.e.~let the test
and shape space hold the same functions multiplied by $p$th order
time-dependent polynomials.
Partial integration over the time interval $(T,T+\Delta T)$ and integration
over the cells $c$ by applying Green's theorem then yield an implicit formula
for the update $D_h = Q_h(T+\Delta T) - Q_h(T)$.
ADER-DG's fundamental idea is to replace $Q_h$ and, thus, $F(Q_h)$ with an
``estimated'' evolution $Q_h^*$ within $(T,T+\Delta T)$ plus an 
additional term.
$Q_h^*$ is a predicted solution in space and time---the {\em space-time
predictor}---that results from a set of implicit solves of
(\ref{equation:introduction:PDE}) restricted to individual cells.
We obtain the discrete variational problem \cite{Dumbser:14:Posteriori}

\begin{align}
\label{equation:aderdg:update}
\int_c
v_h\,
D_h
\text{d}x
&=
\int_T^{T+\Delta T}
\int_c
\nabla v_h
:
F(Q_h^*)
\,
\text{d}x
\text{d}t
-
\int_T^{T+\Delta T}
\oint_{\partial c}
v_h [F(Q_h^*) \cdot n] _{\partial c} \,
\text{d}s
\text{d}t.
\end{align}

\noindent
Three algorithmic steps that decompose into further tasks numerically invert
\eqref{equation:aderdg:update}.

\paragraph{ADER-DG step 1: compute space-time predictor}

To obtain a prediction $Q_h^*$, we study the individual cells $c$ independent of
each other, i.e.~we neglect that the solutions of neighbouring cells
interfere.
This gives

\begin{equation}
  \mbox{\texttt{STP:}}
  \qquad
  \forall c \in \mathbb{T}:
  \quad
  \int _{T}^{T+\Delta T}
  \int _c
  \left(
   \frac{\partial Q_h^*}{\partial t}
   +
   \nabla\cdot F(Q_h^*)
   \right)
   \, \varphi_h
   \,
  dxdt
  =0.
  \label{equation:aderdg:stp}
\end{equation}

\noindent
Cell-wisely, it is an implicit scheme that we solve through Picard
iterations.
Lagrange polynomials with Gauss-Legendre collocation points plus Green's theorem
yield a
quadrature-free formulation with sparse operators.
Yet, the number of required Picard iterations per cell is not known
and the solve (of one step) of (\ref{equation:aderdg:stp}) remains
arithmetically expensive.

\texttt{STP} works with the PDE solution in
space and time, i.e.~a $d+1$-dimensional dataset.
Technically, it decomposes into a task \texttt{predict} that spans the
polynomial in space and time, and a task \texttt{extrapolate} that yields the
predicted solution plus its flux along the normal on the cell faces.
The extrapolation is needed by the next step.

\paragraph{ADER-DG step 2: solve Riemann problem}

The projection of both the solution $Q_h^*$ 
and its normal flux $F(Q_h^*) \cdot n$ of (\ref{equation:aderdg:stp}) 
onto the cell faces exhibits jumps.
$n$ is the generic normal vector spanning a row-wise
scalar product.
$Q_h^{*\pm }$ and $F(Q_h^*)^{\pm} \cdot n$
from one side ($+$) and the other ($-$) typically
differ; not only by a sign.
ADER-DG's second step thus replaces the discontinuous expression by a
numerical flux.
It applies a Rusanov solver.
This yields
\begin{equation}
  \mbox{\texttt{Riemann:}}
  \quad
  \forall c \in \mathbb{T}:
  \quad
  [F(Q_h^*) \cdot n]  _{\partial c} =
  F^*_h \ \mbox{with} \
  F^*_h = F^*_h(Q_h^{*\pm }, F(Q_h^*)^{\pm }n).
  \label{equation:aderdg:Riemann}
\end{equation}

\noindent
The left and right predicted
solution onto a face are required as input, as well as the left and right
predicted solution flux along the normal.
\texttt{Riemann} yields flux contributions over $(T,T+\Delta T)$ along the
cell faces.
We formalise it as a task
\texttt{solveRiemann} determining $F^*_h$, which is formally a $F^*_h(t)$
with $t \in (T,T+\Delta T)$.
We note that holding $F_h^{*\pm}$ is redundant for 
$B=0$ as the two sides differ only by their sign.
Yet, it allows each cell to work with ``its'' Riemann solve result and
applies without modifications to $B \not = 0$.

\paragraph{ADER-DG step 3: correct predicted solution}

We finally return to the non-local
formulation (\ref{equation:aderdg:update}) which is made
globally explicit through $Q_h^*$ and $F_h^{*\pm}$.
This explicit forward operator evaluation renders ADER-DG a single-step scheme
and decomposes (\ref{equation:aderdg:update}) into a cell-local contribution which we have
available from ADER-DG's first step plus a contribution from the Riemann
problem.
We add the Riemann solve's result to the predicted value.
It is corrected.
Hence, our third step
is denoted as \texttt{Corrector}.

The \texttt{Corrector} step decomposes into a task
\texttt{integrateVolume} collapsing the space-time polynomial of the
predictor onto $T+\Delta T$,
an \texttt{integrateFace} task integrating over the result of the Riemann solve 
and thus projecting the space-time evolution along the cell faces onto 
the cells' spatial data at $T+\Delta T$, an 
\texttt{update} task which swaps $D_h$ into the location of $Q_h$, and \texttt{calculateTimeStep} which
returns an admissible time step size $\Delta T_{adm}$ for the next time step
due to the CFL condition.
The overall predictor-corrector sequence equals

\begin{eqnarray}
  Q_h (T+\Delta T) & =  &
  \left(
  id+
  \texttt{integrateVolume}
  \circ
  \texttt{predict}
  +
  \texttt{integrateFace}\ 
  \circ
  \right.
  \nonumber \\
  &&\
  \left.
  \texttt{solveRiemann}
  \circ
  \texttt{extrapolate}
  \circ
  \texttt{predict}
  \right)
  Q_h (T)
  \label{equation:operator-multiple-evaluations}
  \\
  & = &
  \left(
  id+
   \left(
  \texttt{integrateVolume}
  +
  \texttt{integrateFace}
    \ \circ \right. \right.
  \nonumber \\
  &&\
  \left. \left.
  \texttt{solveRiemann}
  \circ
  \texttt{extrapolate}
  \right)
  \circ
  \texttt{predict}
  \right)
  Q_h (T)
  \label{equation:operator}
\end{eqnarray}

\noindent
as operator evaluation with its memory footprints displayed in Table
\ref{table:memory-requirements}.

\addtolength{\tabcolsep}{-4pt}   
\begin{table}
 \caption{
  Memory requirements per task as number of double values (degrees
  of freedom): how much data per task is read in and how much data is written.
  All data is normalised per task and per cell $c$ or neighbour cell pair
  $c_a$,$c_b$.
  We group the tasks along three steps \texttt{STP}, \texttt{Riemann} and
  \texttt{Corrector}.
  \label{table:memory-requirements}
 }
 \centering
 \footnotesize 
 \begin{tabular}{l|rrp{3.8cm}} 
 Task & \multicolumn{1}{c}{in} & \multicolumn{1}{c}{out}  & Remarks \\
 \hline
 \hline
 \texttt{STP} \\
 \texttt{predict}($c$) & $m \cdot (p+1)^{d\phantom{+1}}$ & $(d+1) \cdot m \cdot
 (p+1)^{d+1}$ & determine $Q_h^*$ and $F(Q_h^*)$ \\
 \texttt{extrapolate}($c$) & $(d+1) \cdot m \cdot (p+1)^{d+1}$ & $4d\cdot m
 \cdot (p+1)^{d\phantom{+1}}$ & extrapolate $Q_h^*$ and $F(Q_h^*)\cdot n$ to all
 $2d$ faces of a cell \\
 \hline
 \hline
 \texttt{Riemann} \\
 \texttt{solveRiemann}($c_a$,$c_b$) & $4d \cdot m \cdot (p+1)^{d\phantom{+1}}$ &
 $2d\cdot m \cdot (p+1)^{d\phantom{+1}}$
  & may store result in one of the input arrays
  \\
 \hline
 \hline
 \texttt{Corrector} \\
 \texttt{integrateVolume($c$)}
  & $d\cdot m \cdot (p+1)^{d+1}$ & $m \cdot (p+1)^{d\phantom{+1}}$ 
  & integrate predicted $F(Q_h^*)$
  \\
 \texttt{integrateFace}($c$) & $2d\cdot m \cdot (p+1)^{d\phantom{+1}}$ & $m
 \cdot (p+1)^{d\phantom{+1}}$ & accumulate in output data structure of
 \texttt{integrateVolume} \\
 \texttt{update}($c$)  & $m \cdot (p+1)^{d\phantom{+1}}$  & $m \cdot
 (p+1)^{d\phantom{+1}}$ & add $D_h$ to $Q_h$
 \\
 \texttt{calcTimeStep}($c$) & $m \cdot (p+1)^{d\phantom{+1}}$ & 1
 $\phantom{\cdot (p+1)^{d+1}}$
 \end{tabular}
\end{table}
\addtolength{\tabcolsep}{4pt}   

\paragraph{Time stepping}

If the combination of PDE plus initial and boundary conditions yields
a sufficiently smooth solution, the convergence of ADER-DG
is numerically verified to converge with
optimal order; as long as $\Delta T$ does not harm the CFL condition
\cite{Dumbser:06:ADERDG,Gassner:11:ExplicitDG}.
Two requirements arise for our implementation from this statement:
On the one hand, a robust implementation has to determine 
$\Delta T_{adm}$ numerically, and it has to ensure these are not exceeded.
On the other hand, 
a $p$th order scheme is known to yield oscillating
solutions and non-physical values (such as negative densities) once 
discontinuous solutions (``shocks'') arise \cite{Hesthaven:08:NodalDG}.
A robust time stepping thus has to be able to identify
localised nonphysical solutions ({\em troubled cells}) after the
\texttt{Corrector}, roll back to the old time step and recompute the solution
through Finite Volumes. 
The Finite Volumes act as limiter.


Global time stepping has to restrict $\Delta T_{adm}$ once all
\texttt{Corrector} tasks have terminated, before it broadcasts the next
time step size to use to all ranks.
Local time stepping, i.e.~time stepping where the time step size
depends with $\Delta T_{adm}/3^{\ell}$ on the mesh size ($\ell $ is the
spacetree level) equals global time stepping w.r.t.~the exchange of $\Delta T_{adm}$.
Yet, the restriction can be temporarily sparsified, i.e.~is to be done for the
coarsest cells only.
Local time stepping in
ADER-DG advances cells on the levels $\ell$ and $\ell +1$ through \texttt{STP}.
As $\Delta T$ depends on $\ell$, the finer cells on level $\ell+1$ now can
evaluate ``their'' \texttt{Riemann} problem and \texttt{Corrector} step.
This Riemann solve tackles the jump adjacent to the coarser cell
partially in time.
It thus yields a partial update of the coarse neighbour's next time step's
solution through \texttt{integrateFace}.
Since we rely on tripartitioning, the coarser cells on level $\ell $ cannot run the Riemann solve until
the adjacent finer cells on level $\ell +1$ have made three time steps.
This pattern extends to non-balanced grids recursively,
yet  assumes that $\Delta T$ depends smoothly on the mesh size.
If this does not hold, more than three steps per finer grid level are
required.
Global $\Delta T$ coordination in this case becomes necessary.
As a result, we restrict our analysis to global time stepping.
Local time
stepping realises global data flow and data exchange along the same lines though with
fewer reductions.

%
%
\paragraph{Finite Volume limiter}
Following \cite{Dumbser:14:Posteriori,Zanotti::15::SpaceTimeLimitingADERDG}, 
we augment the ADER-DG scheme with adaptive mesh refinement plus subcell
limiting.
Let {\em troubled} cells be cells where the hosted ADER-DG solution
is inadmissible.
It oscillates or does physically not make any sense for example.
Besides the ADER-DG data, we make these cells host an additional regular
Cartesian subgrid of the resolution $(2p+1)^d$ and advance a Finite
Volumes (FV) scheme on this subgrid forward in time.
After each time step, averaging projects the FV solution back onto the DG space
and then assesses whether the FV solution can be represented by the polynomials
without harming the admissibility.
Once this is the case for all $m$ components, we remove
the Cartesian FV grid, remove the troubled marker and continue
with the ADER-DG scheme within the cell.

\begin{corollary}
 \label{corollary:FV}
 Finite Volume schemes with Rusanov fluxes or a MUSCL-Hancock Riemann solver
 neglecting cell interactions that are not face-connected can be cast into our
 predictor-corrector formalism.
\end{corollary}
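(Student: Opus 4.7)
The plan is to prove the corollary by giving, for each listed scheme, an explicit mapping of its algorithmic steps onto the task decomposition \texttt{predict}, \texttt{extrapolate}, \texttt{solveRiemann}, \texttt{integrateVolume}, \texttt{integrateFace}, \texttt{update}, \texttt{calcTimeStep} from (\ref{equation:operator-multiple-evaluations}). I would treat the two cases in turn.

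For the Rusanov Finite Volume scheme, the first step would be to specialise the ADER-DG polynomial degree formally to $p=0$, so that each cell $c$ carries a single spatial mean per component of $Q$. The task \texttt{predict} then collapses to the trivial space--time extension $Q_h^*(x,t)=Q_h(x,T)$ on $(T,T+\Delta T)$, which still satisfies (\ref{equation:aderdg:stp}): tested against the piecewise-constant $\varphi_h$, the volume integral of $\nabla\cdot F(Q_h^*)$ reduces to a pure boundary contribution that Rusanov will consume later. The \texttt{extrapolate} task reduces to copying the cell value and $F(Q_h^*)\cdot n$ onto each of the $2d$ faces, \texttt{solveRiemann} coincides with (\ref{equation:aderdg:Riemann}) under the Rusanov choice, and \texttt{integrateVolume} vanishes because $\nabla v_h=0$ in the single-constant basis. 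What remains of (\ref{equation:aderdg:update}) is the face integral, which is precisely the flux-divergence update executed by \texttt{integrateFace} and \texttt{update}, with \texttt{calcTimeStep} returning the standard maximum-signal-speed CFL bound.

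For MUSCL-Hancock, I would keep the same skeleton but enrich the predictor. The decisive observation is that MUSCL slope reconstruction touches only face-connected neighbours---exactly the hypothesis of the corollary---and therefore fits within the face channels already provided between \texttt{extrapolate} and \texttt{solveRiemann} in Table \ref{table:memory-requirements}. Concretely, I would extend the input of \texttt{predict} by the $2d$ face-connected neighbour means, let it reconstruct the piecewise-linear in-cell representation and propagate it by $\Delta T/2$ through the Hancock ODE. The \texttt{extrapolate} task then evaluates the resulting linear-in-space representation at $T+\Delta T/2$ on both sides of every face; \texttt{solveRiemann} (again Rusanov) and the corrector tasks carry over verbatim.

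The main obstacle I anticipate is reconciling MUSCL-Hancock's non-local slope reconstruction with ADER-DG's strictly cell-local \texttt{predict}. My resolution would be to argue that the dependency pattern sketched on the right of Figure \ref{figure:aderdg:eyecatcher} already accommodates a one-cell halo without introducing a new task type: only the input arity of \texttt{predict} grows, while its position in the partial order is unchanged. Because the hypothesis explicitly excludes interactions that are not face-connected, no diagonal or further-neighbour edges need be added, so the instantiated task graph retains the shape of the ADER-DG one and the corollary follows.
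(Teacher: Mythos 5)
Your Rusanov argument matches the paper's Appendix~\ref{appendix:aderdg} proof almost verbatim: set $p=0$, make the predictor the identity in time, have \texttt{extrapolate} push the cell mean and normal flux to the $2d$ faces, identify \texttt{solveRiemann} with the Rusanov numerical flux, and observe that both integration tasks degenerate into multiplication by $\Delta T$.

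Your MUSCL--Hancock argument, however, departs from the paper in a way that breaks the formalism it is supposed to respect. You propose to ``extend the input of \texttt{predict} by the $2d$ face-connected neighbour means'' so that slope reconstruction and the Hancock half-step happen inside \texttt{predict}. But Sec.~\ref{section:theorems} defines \emph{element-wise} precisely so that a cell operator ``has no access to any other cell,'' only to the cell's own data and its faces; Table~\ref{table:memory-requirements} fixes \texttt{predict}'s input to $m(p+1)^d$ data, i.e.\ the cell's own $Q_h$. Widening the stencil of \texttt{predict} is therefore not a cosmetic increase of ``input arity'': it inserts a brand-new cross-cell dependency $\texttt{Corrector}(c_b)\before\texttt{STP}(c_a)$ (or, at best, $\texttt{STP}(c_b)\before\texttt{STP}(c_a)$) that is not present in (\ref{equation:tasking:dependencies}), so the claim that ``the instantiated task graph retains the shape of the ADER-DG one'' is not established. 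In particular this would undermine the fused sweep of Algorithm~\ref{algorithm:algorithm-efficient-ADER-DG} and the single-touch theorems, which all assume the original dependency structure. The paper resolves the non-locality differently: \texttt{predict} stays the trivial constant-in-time extension, and \texttt{extrapolate} exports not the flux but the face-normal gradients (equivalently, the reconstructed neighbour-of-neighbour values) into the $F$ slot of the face data; since \texttt{solveRiemann} is the one task that already sees both sides of a face, it is \emph{there} that the MUSCL reconstruction and Hancock half-time prediction are performed, via $\texttt{solveRiemann}=\texttt{computeMUSCLHancockFlux}$. That construction also explains the paper's additional hypothesis that the patch hold at least two FV cells per coordinate axis, a condition your proposal neither needs nor recovers, which is a further sign that you have mapped the scheme onto a different (and wider-stencil) formalism than the one defined in the paper.
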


\noindent
The proofs (Appendix \ref{appendix:aderdg}) are technical but clarify that
it is sufficient to study plain ADER-DG's memory
movement and communication-behaviour only.
If we couple ADER-DG with FV, both solvers exhibit the
same communication pattern.
All communication characteristics change only quantitatively.

\section{Statements on ADER-DG}
\label{section:theorems}

Let the term {\em element-wisely} characterise a grid-based algorithm where an
operator on a cell has no access to any other cell.
It may solely access a cell's data and its faces.
Element-wise operators working on faces may access only face data.
Let the term {\em single-touch} characterise an implementation which reads and 
writes unknowns only once per time step \cite{Weinzierl:17:BoxMG}. 

\begin{constraint}
 Multiple \texttt{STP} evaluations have to be avoided.
\end{constraint}

\noindent
We anticipate the experimental insight that the
prediction in variation (\ref{equation:operator}) dominates the computational
cost. 
Though (\ref{equation:operator-multiple-evaluations}) and
(\ref{equation:operator}) are symbolically the same, 
only (\ref{equation:operator}) thus is computationally feasible.

\begin{constraint}
 We focus on problems where (i) one solution snapshot $Q_h$ does not fit into the registers and machine
 caches, and (ii) one cell's space-time polynomial, i.e.~all data read and written by
 \texttt{predict}, fit into the registers/caches.
\end{constraint}

\noindent
Any more relaxed assumption on problem sizes or machine configurations 
relaxes our single-touch statements.
We abstract from different cache levels and their multifaceted
impact on real-world performance.
For most machines, our memory transfer statements apply at least to
the last-level cache.


\begin{theorem}[Memory footprint theorem]
 \label{theorem:memory}
 Even though ADER-DG's prediction spans a space-time polynomial, it is
 sufficient for an element-wise ADER-DG realisation to store a
 space-time hull of each space-time hypercube in the mesh persistently.
 This hull has to encode flux contributions along faces besides solution data.
\end{theorem}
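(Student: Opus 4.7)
The plan is to derive the statement directly from the operator decomposition in (\ref{equation:operator}), tracking for each intermediate quantity whether it must be read after the sweep that produced it or whether it can be consumed immediately. First I would read off from (\ref{equation:aderdg:update}) that every consumer of the space-time predictor $Q_h^*$ falls into exactly two categories: a cell-local volume integrand $\nabla v_h : F(Q_h^*)$, consumed through \texttt{integrateVolume}, and the face traces $Q_h^{*\pm}$ together with $F(Q_h^*)^{\pm}\cdot n$, consumed only through \texttt{solveRiemann} and subsequently \texttt{integrateFace}. This partition is the heart of the argument: no downstream operator touches the interior of the space-time polynomial.

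Next I would invoke element-wiseness. Because \texttt{predict}, \texttt{integrateVolume}, and \texttt{extrapolate} all act on a single cell's data, they can be composed within a single visit of $c$, with the $(d+1)$-dimensional intermediate $Q_h^*$ held only in registers or caches; the second constraint precisely licenses this. On exit from the visit, I would argue that only two classes of data must survive to enable the remaining steps: the $d$-dimensional cell solution $Q_h(T)$ into which the volume contribution has been accumulated, and, on each of the $2d$ faces of $c$, the predicted trace and normal flux extrapolated from the interior. Both live on the boundary of the $(d+1)$-dimensional hypercube $c\times(T,T+\Delta T)$, i.e.~on its space-time hull, and the face component is exactly what is meant by ``flux contributions along faces.''

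The main obstacle is to make this element-wise framing rigorous at face-adjacent pairs of cells, particularly across refinement levels. Here I would argue locality explicitly: each cell persists its own outward-facing trace and the associated normal flux, so \texttt{solveRiemann} between any two neighbours $c_a,c_b$ reads only hull data of $c_a$ and $c_b$ and never the interior $Q_h^*$ of either. This same hull suffices in the local time stepping discussion, since the finer level's repeated Riemann solves also only ever touch hull data of the coarser neighbour; via Corollary~\ref{corollary:FV} it transfers to the FV limiter, whose native storage consists of cell averages plus face fluxes and is therefore already a hull.

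Finally I would sanity-check the resulting footprint against Table~\ref{table:memory-requirements}: the persistent hull consists of $m\cdot(p+1)^d$ cell coefficients plus the $4d\cdot m\cdot(p+1)^d$ face values output by \texttt{extrapolate}, exactly the data that must bridge two visits, whereas the $(d+1)\cdot m\cdot(p+1)^{d+1}$ space-time block produced by \texttt{predict} is created and consumed inside one visit and need not be retained. This confirms that a $d$-dimensional-in-space hull (with its flux augmentation) is both necessary and sufficient, proving the theorem.
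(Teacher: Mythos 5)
Your proposal follows essentially the same constructive route as the paper (Technique~\ref{technique:integrate-early}): fuse \texttt{predict}, \texttt{extrapolate}, and \texttt{integrateVolume} inside a single cell visit so the $(d+1)$-dimensional space-time block is produced and consumed in-cache, and persist only the $d$-dimensional spatial data plus the $2d$ space-time face traces and normal fluxes, i.e.~the hull. One detail differs from the paper and would matter if you pushed the argument through to Corollary~\ref{corollary:hull}: you accumulate the early \texttt{integrateVolume} result directly into $Q_h(T)$, whereas the paper explicitly writes it into a separate array $D_h$ so that $Q_h(T)$ survives intact for the Finite Volume limiter's rollback (this is also why the paper's hull carries a factor $2m(p+1)^d$ for ``$Q_h$ and $D_h$'' rather than your $m(p+1)^d$). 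Since $Q_h$, $D_h$, and the $2d$ face arrays together are still precisely the hull of the space-time hypercube (the bottom and top $d$-dimensional caps plus the $2d$ time-extruded side faces), this does not break the theorem as stated, but your sanity-check count against Table~\ref{table:memory-requirements} is one cell array short of the footprint the paper actually claims.
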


\begin{theorem}[Weak element-wise single-touch theorem]
 \label{theorem:weak-single-touch}
 There is a $1 \leq C_\textnormal{rerun} <2$ such that the predictor-corrector operator in
 the formulation (\ref{equation:operator}) can be realised element-wisely with
 $C_\textnormal{rerun}$ data reads per time step.
\end{theorem}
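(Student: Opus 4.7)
The plan is to exploit the factorised operator in (\ref{equation:operator}) which isolates a single \texttt{predict} evaluation per cell, together with the hull-only persistent footprint guaranteed by Theorem \ref{theorem:memory}. Under Constraint 2, one cell's space-time polynomial fits into cache, so every task fed by \texttt{predict}---namely \texttt{extrapolate} and \texttt{integrateVolume}---can consume the polynomial while it is still hot and need never touch it again once the hull (extrapolated $Q_h^{*\pm}$ and $F(Q_h^*)^{\pm}\cdot n$) and the volume contribution have been written out. This reduces all subsequent tasks (\texttt{solveRiemann}, \texttt{integrateFace}, \texttt{update}, \texttt{calcTimeStep}) to working on the $d$-dimensional hull data or the cellwise update $D_h$.

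First I would construct an element-wise schedule with two logical phases that are shifted by half a grid traversal. Phase A, visiting a cell, reads $Q_h(T)$ once, invokes \texttt{predict} in cache, immediately chains \texttt{extrapolate} and \texttt{integrateVolume} on the in-cache polynomial, and writes back (i) the face hulls and (ii) a partial update $\tilde D_h$. Phase B, executed in the same traversal but on cells whose neighbours' hulls are already available from the preceding sweep, runs \texttt{solveRiemann} on the face hulls, \texttt{integrateFace} into $\tilde D_h$, \texttt{update}, and \texttt{calcTimeStep}. Since both phases happen within one grid sweep, each persistent cell datum $Q_h$ is read exactly once (by \texttt{predict}) and written exactly once (by \texttt{update}) per time step; face hulls are read by the two Riemann-side tasks but are hull, not volume data, and hence do not count against the volume-read budget. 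In the ideal case this gives $C_\textnormal{rerun}=1$.

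The slack $C_\textnormal{rerun} > 1$ enters when the schedule's optimism is violated. Because Phase A of step $n+1$ issues \texttt{predict} using the $\Delta T_{adm}$ delivered by Phase B of step $n$, a sudden tightening of the CFL bound, or a troubled-cell detection after the \texttt{Corrector} (handled by the FV limiter, which by Corollary \ref{corollary:FV} fits the same predictor-corrector skeleton), forces a rollback and a second read of $Q_h(T)$ together with a second \texttt{predict}. I would then bound the worst-case number of reads of any single persistent datum by $2$: one optimistic pass plus at most one corrective pass, with no further passes needed because the pre-update $Q_h(T)$ has not yet been overwritten thanks to the single-touch write ordering.

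The hard part will be turning this worst-case bound of $2$ into the strict inequality $C_\textnormal{rerun}<2$. To do this I would formalise the informal smoothness assumption already used for local time stepping in Sect.~\ref{section:aderdg}: under smoothly evolving CFL bounds and spatially localised limiter activation, the set of cells requiring rollback in any given step is a proper subset of $\mathbb{T}$, so the amortised read count over $|\mathbb{T}|$ cells is strictly below $2$. I would present this as a conditional statement: given smooth CFL evolution and localised troubling, $C_\textnormal{rerun} = 1 + \rho$ with $\rho < 1$ the expected fraction of cells needing recomputation per step. The remainder of the proof is bookkeeping, walking through Table \ref{table:memory-requirements} to verify that no task other than \texttt{predict}/\texttt{update} touches persistent volume data and that every other access is either to a hull, a scalar $\Delta T_{adm}$, or an in-cache temporary.
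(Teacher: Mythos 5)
Your proposal follows the paper's own constructive route almost exactly: shift the pipeline by half a traversal (Technique~3), chain \texttt{predict} $\to$ \texttt{extrapolate} $\to$ \texttt{integrateVolume} in cache so only the hull is persisted (Techniques~1 and~4, Theorem~\ref{theorem:memory}, Constraint~2), interleave \texttt{Riemann}/\texttt{Corrector} with the next step's \texttt{STP} on a cell-by-cell basis (Technique~2), and let the $\Delta T$ check be optimistic (Technique~5), which gives the two endpoints $1$ (no rerun) and $2$ (one full rerun). One small correction to your argument for the strict inequality: in the paper's optimistic scheme a CFL failure forces a complete extra \texttt{STP} sweep over \emph{all} cells, so $C_\textnormal{rerun}<2$ does not come from only a proper spatial subset being recomputed within a bad step, but from amortisation over time steps---the assumption is that \emph{not every} time step triggers a rerun; the spatially-localised-rollback reasoning applies to the limiter path but not to the CFL path.
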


\noindent
Both proofs are constructive and apply the techniques from
Sec.~\ref{section:tasks}.
An experimental $C_\textnormal{rerun} \approx 1$ renders the single-touch
property strong. 
If we skipped the term element-wisely, the theorem would become trivial
as (\ref{equation:operator}) would be just one global explicit operator
consisting of two components.

\begin{corollary}
 \label{corollary:naive-element-wise-single-touch}
 A direct element-wise realisation of (\ref{equation:operator}) can not be
 single-touch. 
\end{corollary}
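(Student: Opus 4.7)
The plan is to trace the data-flow of a task-by-task realisation of (\ref{equation:operator}) and to exhibit a persistent datum that must be loaded from main memory more than once per time step. Three observations drive the argument. First, by (\ref{equation:aderdg:Riemann}) the task \texttt{solveRiemann} consumes data sitting on two distinct cells, so no element-wise cell-local kernel can execute it; any realisation of (\ref{equation:operator}) therefore decomposes into at least a cell-centric predictor phase, an intervening face-centric Riemann phase, and a cell-centric corrector phase. Second, the outer $\mathrm{id}$ summand in (\ref{equation:operator}) turns the corrector phase into a consumer of $Q_h(T)$, because forming $Q_h(T{+}\Delta T)$ amounts to adding the computed correction back onto the old solution. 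Third, the predictor phase has already consumed $Q_h(T)$ to evaluate \texttt{predict}.

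I would then invoke Constraint 2 to turn these two logical accesses into two distinct memory touches. The full snapshot $Q_h$ does not fit into the cache budget, so whatever part of $Q_h(T)$ was fetched during the predictor phase of a cell is evicted while the traversal proceeds through the remaining cells and across the face pass. By Theorem \ref{theorem:memory} that face pass streams at least the space-time hull of every cell, which is amply sufficient to flush any residual portion of $Q_h(T)$ from the cache. The corrector phase therefore reloads $Q_h(T)$ from persistent storage, producing a second read and violating the single-touch property.

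The delicate step will be pinning down what ``direct'' excludes. An aggressive optimiser might pre-fold $Q_h(T)$ into the output of \texttt{predict} so that the identity summand never appears explicitly, might postpone \texttt{predict} so that it fuses with the corrector of a later sweep, or might interleave halo communication with the predictor traversal. Each such trick is precisely one of the reorganisations developed in Sect.~\ref{section:tasks}, so my plan is to reserve ``direct'' for the realisation that instantiates every arrow of (\ref{equation:operator}) as a separate task and executes them in the topological order imposed by the composition, with no task-shifting between grid sweeps. Under that reading the two reads of $Q_h(T)$ are unavoidable, which settles the corollary.
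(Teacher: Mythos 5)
Your argument is sound for the specific realisation you consider, but it is substantially narrower than what the paper proves, and you have implicitly acknowledged this by carving ``direct'' down to a strictly topological-order, three-sweep implementation. The paper's proof in Appendix~\ref{appendix:theorems} is a case analysis over the three cyclic shifts of the step sequence: it treats $(\texttt{STP}\before\texttt{Riemann}\before\texttt{Corrector})^+$, $\texttt{STP}\before(\texttt{Riemann}\before\texttt{Corrector}\before\texttt{STP})^+$, and $\texttt{STP}\before\texttt{Riemann}\before(\texttt{Corrector}\before\texttt{STP}\before\texttt{Riemann})^+$, showing that every alignment of the phase boundary fails. For the first alignment the paper's obstruction is the spatial dependency $c_b\before c_a$ through the shared Riemann face, which forces a second touch of $c_b$; for the other two alignments the obstruction is the $\Delta T_{adm}$ reduction in \texttt{calcTimeStep}, which can be invalidated by a cell processed later and therefore forces a recomputation of a previously completed \texttt{STP}. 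That second obstruction is the more important one: it is precisely what motivates the optimistic time-stepping technique and the bound $C_\textnormal{rerun}<2$, and the remark following the corollary (``applying an isolated, single technique alone is not sufficient'') is there because the paper wants the corollary to rule out partial shifting, not only the literal sweep-per-step schedule.

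Your route through the $\mathrm{id}$ summand and the double read of $Q_h(T)$ is a perfectly valid, arguably more elementary, obstruction for the unshifted three-sweep code, and it is not one the paper spells out. But it evaporates as soon as the realisation shifts \texttt{STP} to the tail of the traversal so that the update of $Q_h$ and the subsequent \texttt{predict} happen while $Q_h$ is still cache-resident --- that is exactly Techniques~\ref{technique:shift} and~\ref{technique:fuse}. At that point the only remaining obstruction is the global $\Delta T_{adm}$ coupling, and your proof never reaches it. So if you want your proof to replace the paper's, you would need to either (a) keep your narrow reading of ``direct'' and accept that the corollary then says less than the paper intends, or (b) append the time-step-coordination argument to cover the shifted alignments. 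As written, your proof is a correct proof of a weaker statement.
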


\noindent
This proof is technical (Appendix \ref{appendix:theorems}) and
relies on the fact that the steps are
subject to a partial temporal order (Fig.~\ref{figure:aderdg:eyecatcher})

\begin{eqnarray}
 \forall c_a,c_b \in \mathbb{T}: \texttt{STP}(c_a)
  & \before &
  \texttt{Riemann}(c_a,c_b)
  \qquad
  \wedge
  \nonumber
  \\
 \texttt{Riemann}(c_a,c_b)
  & \before &
  \texttt{Corrector}(c_a).
  \label{equation:tasking:dependencies}
\end{eqnarray}

\noindent
While it motivates some of
the techniques we use later to construct a single-touch element-wise algorithm,
it demonstrates that applying an isolated, single technique alone is not
sufficient.

\section{Communication-avoiding, task-based ADER-DG}
\label{section:tasks}

ADER-DG tasks are subject to the temporal constraints  
(\ref{equation:tasking:dependencies}).
Additional tasks may enter the task graph:
\begin{enumerate}[leftmargin=*]
  \item If requested by the user, we insert a \texttt{Plot} task after the
    \texttt{Corrector}.
  \item A time step size computation follows the solution update.
  \item If inadmissible high order solutions are to be expected,
    we merge an
  \texttt{Admissible} task into \texttt{Corrector} which can label cells as
    {\em troubled}. If a cell becomes troubled, we roll back to the previous
    solution, insert the FV's patch representation of this solution, and 
    add an additional $\texttt{STP}$ task rerunning the time step with FV.
  \item If a cell is subject to FV, we add an additional
    \texttt{Reconstruct} step after the corrector that projects the Finite
    Volume solution back onto the higher-order ADER-DG representation. If a FV
    solution could validly be represented by the polynomial, the cell's state is
    reset and the Cartesian patch data are erased. 
  \item To couple ADER-DG solutions with FV, there is a helper
    task type inserting ``non-troubled'' FV cells around
    the troubled ones. Such a halo solely represents the DG solution as
    Finite Volumes, yet does not run FV updates. The cells allow the FV scheme
    to compute fluxes on the patch faces of troubled cells. They act as coupling layer between ADER-DG and the limiter.
  \item If adaptive mesh refinement (AMR) is used, additional AMR tasks realising
    the refinement and coarsening criteria are to be inserted.
  \item Additional tasks for boundary and initial conditions---both overwrite
    the computed solution with a prescribed or altered data---complete the
    picture.
\end{enumerate}

\begin{algorithm}[htb]
 \caption{
  One time step of the simplest version of ADER-DG without limiter as
  traditional, sequential pseudo code.
  Left: Standard time stepping.
  Right: Shifted time stepping.
  $C_{\Delta T} \leq 1$ is a chosen security factor.
  \label{algorithm:time-stepping}
 }
 \begin{minipage}{0.49\textwidth}
  \begin{algorithmic}[1]
   \While{$T<T_{final}$}
  \For{$c \in \mathbb{T}$}
      \Comment \texttt{STP}
    \State $\texttt{predict}(c)$
      \Comment use $\Delta T$
    \State $\texttt{extrapolate}(c)$
  \EndFor
  \For{face-connected $c_a,c_b \in \mathbb{T}$} 
    \State  \Comment \texttt{Riemann}
    \State $\texttt{solveRiemann}(c_a,c_b)$
  \EndFor
  \State $\Delta T_{adm} \gets \infty$
  \For{$c \in \mathbb{T}$}
    \Comment \texttt{Corrector}
    \State $\texttt{integrateVolume}(c)$
    \For{all faces $f$ of $c$}
      \State $\texttt{integrateFace}(f)$
    \EndFor
    \State $\texttt{update}(c,f_1,f_2,\ldots)$
    \State $\Delta T_{adm} \gets \min\{ $
    \State \phantom{xx} $\Delta T_{adm}, \texttt{calcTimeStep}(c)\}$
  \EndFor
  \State $T \gets T + \Delta T$ 
  \State $\Delta T \gets C_{\Delta T} \cdot \Delta T_{adm}$
\EndWhile
 
  \end{algorithmic}
 \end{minipage}
 \begin{minipage}{0.49\textwidth}
  \begin{algorithmic}[1]
   \While{$T<T_{final}$}
  \For{face-connected $c_a,c_b \in \mathbb{T}$} 
    \State  \Comment \texttt{Riemann}
    \State $\texttt{solveRiemann}(c_a,c_b)$
  \EndFor
  \State $\Delta T_{adm} \gets \infty$
  \For{$c \in \mathbb{T}$}
    \Comment \texttt{Corrector}
    \State $\texttt{integrateVolume}(c)$
    \For{all faces $f$ of $c$}
      \State $\texttt{integrateFace}(f)$
    \EndFor
    \State $\texttt{update}(c,f_1,f_2,\ldots)$
    \State $\Delta T_{adm} \gets \min\{ $
    \State \phantom{xx} $\Delta T_{adm}, \texttt{calcTimeStep}(c)\}$
  \EndFor
  \State $T \gets T + \Delta T$ 
  \State $\Delta T \gets C_{\Delta T} \cdot \Delta T_{adm}$
  \For{$c \in \mathbb{T}$}
    \Comment \texttt{STP}
    \State $\texttt{predict}(c)$
    \State $\texttt{extrapolate}(c)$
  \EndFor
\EndWhile
 
  \end{algorithmic}
 \end{minipage}
\end{algorithm}

%
%
\noindent
Our ADER-DG scheme is defined over a computational grid.
Given the grid plus the task dependencies
stemming from (\ref{equation:tasking:dependencies}), we obtain a global task
graph per time step.
We interpret the grid instantiating the task graph at hands of
(\ref{equation:tasking:dependencies}) \cite{Weinzierl:17:Peano}.
As ADER-DG's task dependencies are localised, we can run over the grid (in
parallel) and issue the tasks directly per cell.
There is no need to assemble the task graph.

If we translate ADER-DG's steps straightforwardly into a traversal code, we
obtain a sequence of loops (Alg.~\ref{algorithm:time-stepping}). 
The loop body per cell or face, respectively, is the actual task invocation.
We refer to such an implementation as straightforward.
One loop issuing \texttt{STP}
with high arithmetic intensity is followed by two loops
triggering computationally cheap tasks.
Yet, we observe that, on the one hand, ADER-DG imposes only a partial order,
and we observe that not all tasks have to be tied to ``their'' step.
We can reorder and rearrange the execution order as long as 
all data remains consistent.
On the other hand, ADER-DG does not
enforce us to run the same operation on all grid entities.
It allows us to run the predictor in one grid area, while another area
performs Riemann solves.
There is no need to align each grid traversal with exactly one
task type.
These observations give rise to the following techniques.
Throughout their presentation, we make a {\em realisation time step}
comprise all three task types \texttt{STP}, \texttt{Riemann} and
\texttt{Corrector}.

%
%
\begin{technique}
  \label{technique:integrate-early}
  Move tasks from one logical step into an earlier step such that they are
  ran as early as possible.
\end{technique}

\noindent
This technique is data-centric as it helps to avoid capacity and conflict
cache misses as well as register spilling.

\begin{implication}
 We evaluate \texttt{integrateVolume} directly after \texttt{predict} and 
 evaluate \texttt{integrateFace} directly once \texttt{solveRiemann}
 has terminated. This allows us to eliminate temporary variables.
\end{implication}

\noindent
\texttt{predict} spans the space-time polynomial.
Applying the technique, this outcome is to be held only temporarily:
It is passed over to \texttt{extrapolate} before we integrate it and discard the
data.
The integration is a task logically assigned to the \texttt{Corrector} step, 
i.e.~we bring it forward.
Its result is stored in $D_h$ instead of the old time step's $Q_h$ as we have to
preserve $Q_h$ to facilitate rollbacks required for a FV intervention.

The technique is applied analogously to \texttt{integrateFace}.
\texttt{solveRiemann} handles jumps over a face in space-time.
\texttt{integrateFace} maps this data of spatial cardinality $m(p+1)^{d-1}$ plus
a temporal dimension onto a contribution to the solution update $D_h$.
Different to the space-time predictor, moving the integration from the
\texttt{Corrector} into \texttt{Riemann} does not allow us to reduce the memory
footprint.
Adaptive time stepping requires us to hold $Q_h^{*\pm}$ and $F(Q_h^{*\pm}) \cdot
n$ over the whole space-time span of a cell to determine (partial) $F_h^{*\pm}
\cdot n$.
Passing the two $F_h^{*\pm} \cdot n$ ingredients directly into
\texttt{integrateFace} would violate our definition of cell-wisely.


%
%
\begin{technique}
  \label{technique:different-tasks}
  Run different tasks on different grid entities concurrently.
\end{technique}

\noindent
Graph-based tasking systems model tasks as nodes in a graph and the task
dependencies as edges.
At one particular time, a tasking system may launch any task which has no
pending dependencies anymore.
Once a task terminates, the node plus all outgoing edges (data writes) are
removed from the graph.
The power of tasking \cite{Reinders:07:TBB,YarKhan:17:PlasmaOpenMP} results from
the fact that different task types might be launched simultaneously.

This idea translates to grid-based solvers:
When we run through the grid, we impose a spatial order on grid entities.
We may launch the Riemann solve whenever we run into a face for the first time,
i.e.~when we read it from the main memory the first time (touch-first
semantics).
If we stick to a pure cell-wise traversal, the time when a face is ``touched''
for the very first time results directly from the cell ordering:
if we enter a cell, we touch all of its faces; maybe some of them for the first
time.

\begin{implication}
 When we enter a cell to perform \texttt{Corrector}, we analyse all of its $2d$
 adjacent faces whether they have been read from main memory before.
 For faces that have not yet been used, we run the \texttt{Riemann} solve. 
 As the \texttt{Corrector} issues the tasks from the \texttt{Riemann} step
 on-demand, \texttt{Corrector} and \texttt{Riemann} evaluations are intermixed
 throughout one grid sweep.
 We eliminate an explicit \texttt{Riemann} sweep.
\end{implication}

\noindent
This technique is particularly interesting in the context of
cache-oblivious algorithms \cite{Kowarschik:03:CacheOverview}.
Let a grid traversal minimise the time between when one cell is handled to
the handling of any neighbour cell.
The first cell's \texttt{Corrector} requires the outcome of the Riemann solves
of its faces. If not computed yet, it determines them.
The neighbouring cell uses these face data, too.
If the time in-between the two task executions is small, we have 
temporarily localised data access:
the probability that the data still resides in cache is high.
Traditional techniques to obtain such traversal orderings are loop blocking and
space-filling curves \cite{bader_space_filling_2013}.
In our experiments, we rely on the latter \cite{Weinzierl:11:Peano}.

%
%
\begin{technique}
  \label{technique:shift}
  Shift the tasks by half a grid sweep.
\end{technique}

\noindent
We kick off the computation with an evaluation of
\texttt{Riemann} and \texttt{Corrector}, before we run the first complete triad
\texttt{STP},\texttt{Riemann},\texttt{Corrector}.
Furthermore, we distinguish time step sizes $\Delta
T_{adm}$, $\Delta T_{old}$ and $\Delta T_{new}$. 
\texttt{Riemann} and \texttt{Corrector} use $\Delta T_{old}$.
\texttt{STP} uses $\Delta T_{new}$.
After \texttt{STP}, we roll over $\Delta T_{old} \gets \Delta T_{new}$.
Formally, we recast the time stepping in (\ref{equation:operator}) into a
sequence of

{\footnotesize
\[
  \left(
  \begin{array}{c}
    Q_h \\
    D_h \\
    (Q_h^{*\pm },F(Q_h^*)^\pm \cdot n )
  \end{array}
  \right)
  (T+\Delta T)
  = 
  \underbrace{
    \left( 
      \begin{array}{ccc}
       1 & 0 & 0 \\
       \texttt{integrateVolume} \circ \texttt{predict} & 0 & 0 \\
       \texttt{extrapolate} \circ \texttt{predict} & 0 & 0 \\
      \end{array}
    \right)
  }_{
  \mbox{run \texttt{STP} of next time step already}
  }  
  \cdot
\]

\begin{equation}
  \underbrace{
    \left( 
      \begin{array}{ccc}
        0 & 1 & 
        \texttt{integrateFace}
        \circ
        \texttt{solveRiemann} 
        \\ 0 & 0 & 0 
        \\ 0 & 0 & 0 
      \end{array}
    \right)
  }_{
  \mbox{complete \texttt{Corrector} and \texttt{Riemann} started in previous
  sweep} }
  \left(
  \begin{array}{c}
    Q_h \\
    D_h \\
    (Q_h^{*\pm },F(Q_h^*)^\pm \cdot n )
  \end{array}
  \right)
  (T)
  \label{equation:tasking:shift}
\end{equation}
}

\noindent
applications (Alg.~\ref{algorithm:time-stepping}).
Here, \texttt{integrateVolume} is already merged into \texttt{STP}
(cmp.~Technique \ref{technique:integrate-early}).
While the operator matrices enlist \texttt{predict} twice, it has to
be evaluated only once with the result temporarily held and passed into two
follow-up tasks.
As $\Delta T_{old}=0$ throughout the first grid sweep, the very first
composition of \texttt{Riemann} and \texttt{Corrector} does not modify the solution as
the integrals in (\ref{equation:aderdg:stp}) and (\ref{equation:aderdg:Riemann})
degenerate.

\begin{implication}
Let $3N$ grid sweeps through the grid be used by a straightforward ADER-DG
realisation.
Each sweep implements one ADER-DG step (predictor, Riemann, corrector).
The
Techniques \ref{technique:different-tasks} and \ref{technique:shift} make the
solve require $N+1$ sweeps in total.
\end{implication}

\noindent
Formalism (\ref{equation:tasking:shift}) logically shifts the evaluation of 
ADER-DG by half a grid sweep:
Half of the work of a subsequent time step is already done once a sweep
terminates.
We thus refer to this realisation variant as {\em shifted
evaluation}.
The approach resembles pipelining
\cite{Ghysels:13:HideLatency,Reps:15:Helmholtz}.
Different to most pipelining, our shifts do however not
introduce additional helper variables.
$D_h$ and $(Q_h^{*\pm },F(Q_h^*)^\pm \cdot n )$ have to be
held in memory anyway to facilitate roll backs for the limiter and local time
stepping.

\begin{implication}
Let $\Delta T_{adm}>T_{new}$.
Technique \ref{technique:shift} in combination with Technique
\ref{technique:different-tasks} allows to realise, amortised, one realisation
time step per grid sweep.
\end{implication}

\noindent
The implication yields $1 \leq C_\textnormal{rerun}$ in Theorem
\ref{theorem:weak-single-touch}.

\begin{technique}
  \label{technique:fuse}
  Fuse tasks.
\end{technique}

\noindent
Our shifted grid sweeps run through the cells of $\mathbb{T}$.
Per cell, they analyse the $2d$ faces and trigger \texttt{Riemann} if the
neighbouring cell connected through the face has not been traversed before.
As soon as this cell preamble has terminated, 
the grid sweep triggers \texttt{Correction} and immediately afterwards \texttt{STP}.
All orders obey (\ref{equation:tasking:dependencies}).
Algorithmically, this approach is similar to loop fusion.
Some faces might still ``wait'' for a Riemann solve, while others already have
finished their correction and projected $Q_h^*$ and $F(Q_h^*)$ onto the
faces again.
From a cell's point of view, we invert and fuse the task execution order:
Per cell, \texttt{STP} is immediately launched after \texttt{Corrector}.
This is made possible as we distinguish $\Delta T_{new}$ and  $\Delta T_{old}$.
We fuse \texttt{Corrector} and \texttt{STP} (in this order).
Also the concatenation of \texttt{integrateVolume}
with \texttt{predict} is logically task fusion.

\begin{implication}
Techniques  \ref{technique:different-tasks}, \ref{technique:shift}  and
\ref{technique:fuse} read, amortised, each cell's solution data $Q$
only once per time step as long as $\Delta T_{adm}>T_{new}$.
\end{implication}

\noindent
Even if a code does not physically fuse tasks, any
%
%
%
output data of the first task remains in the cache as input to the
second---subject to a sufficiently big cache---as we run the respective tasks
directly after each other.
Since we schedule our tasks ourselves through the grid traversal, no task
stealing or thread switching can interfere.

%
%
\begin{technique}
Be optimistic.
\end{technique}

\noindent
The explicit nature of ADER-DG requires us to exchange $\Delta T_{adm}$
once a time step completes.
This introduces two synchronisation points in our shifted implementation: 
the grid sweep itself plus the time step size synchronisation.
%
%
Two optimistic modifications allow us to eliminate one synchronisation
without loosing adaptive time step size choices.
On the one hand, we run \texttt{STP} with $\Delta T_{new}$ ignorant of  $\Delta
T_{adm}$: 
As soon as a corrector finishes, its cell is subject to the next predictor
though other correction steps running concurrently or later still might reduce
$\Delta T_{new}$.
On the other hand, we check if $\Delta T_{adm}<\Delta T_{new}$ after all
\texttt{STP}s of one time step have terminated. 
If we find that our chosen time step size harms the CFL 
condition, we reset $\Delta T_{old} \gets C_{\Delta T} \cdot \Delta T_{adm}$ and
rerun the prediction.

A naive implementation would break out of the cell loop immediately once the CFL
condition is harmed.
We however note that this can be disadvantageous as we might run into rippling
loop restarts.
Running all \texttt{STP}s always, thus finishing all updates and $\Delta
T_{adm}$ computations, and reducing $\Delta
T_{adm}$ afterwards implies that the set of \texttt{STP}s is relaunched either
not at all or exactly once.
This is {\em optimistic time stepping} as all \texttt{Corrector}-\texttt{STP}
task pairs are ran even though we might already know after the first few that
the outcome of the second task per pair is invalidated.

\begin{implication}
Even if the optimistic assumption fails, our code requires at most two grid
sweeps per realisation time step.
As we may assume that not every time step harms the CFL condition, we obtain a
real upper bound $C_\textnormal{rerun} < 2$.
\end{implication}

\noindent
In our implementation, we make $\Delta T_{new} = \frac{1}{2}\left( T_{old} +
C_{\Delta T} \cdot \Delta T_{adm} \right)$.
As long as time step sizes are chosen too pessimistic, we make $\Delta T_{new}$
creep towards $\Delta T_{adm}$.

\section{Properties}
\label{section:properties}

\begin{algorithm}[htb]
 \caption{
  Our cell-based, shifted ADER-DG implementation relying on all
  introduced optimisation techniques.
  \label{algorithm:algorithm-efficient-ADER-DG}
 }
 \begin{algorithmic}[1]
  \While{$T \leq T_{final}$}
 \If{$\Delta T_{adm}<\Delta T_{new}$}
    \Comment Optimistic guess had been wrong
  \State $(\Delta T_{old},\Delta T_{new})\gets C_{\Delta T} \cdot \Delta
  T_{adm}$ \Comment Reset time step sizes
  \For{$c \in \mathbb{T}$}
    \Comment \texttt{STP} rerun is additional loop over cells
    \State $\texttt{predict}(c)$
    \State $\texttt{extrapolate}(c)$
    \State $\texttt{integrateVolume}(c)$
  \EndFor
 \EndIf
 \State $\Delta T_{old} \gets \Delta T_{new}$
 \State $\Delta T_{new} \gets C_{\Delta T} \cdot \Delta T_{adm}$
  or $\Delta T_{new} \gets 0.5(\Delta T_{old}+C_{\Delta T} \cdot \Delta T_{adm})$
 \State \Call{timeStep}{$\Delta T_{old},\Delta T_{new}$}
  \Comment{Either strict time step set or creeping average}
\EndWhile
\Function{timeStep}{$\Delta T_{old},\Delta T_{new}$}
 \State $\Delta T_{adm} \gets \infty$
 \For{$c \in \mathbb{T}$}
   \Comment{Element-wise traversal}
  \For{$f \in \mbox{adjacent faces of }c$}
    \Comment \texttt{Riemann}
   \If{$f$ touched first time}
    \State $\texttt{solveRiemann}(f^{+},f^{-})$
   \EndIf
  \EndFor
  \For{$f \in \mbox{adjacent faces of }c$}
    \Comment \texttt{Corrector} using $\Delta T_{old}$
   \State $\texttt{integrateFace}(f)$
  \EndFor
  \State $\texttt{update}(c,f_1,f_2,\ldots)$
    \Comment{$f_1,f_2,\ldots$ are adjacent faces of $c$}  
  \State $\Delta T_{adm} \gets \min\{ \Delta T_{adm}, \texttt{calcTimeStep}(c)\}$
  \State $\texttt{predict}(c)$
    \Comment Immediately kick off next \texttt{STP}
  \State $\texttt{extrapolate}(c)$
    \Comment subject to $\Delta T_{new}$
  \State $\texttt{integrateVolume}(c)$
    \Comment anticipate \texttt{Corrector} task, hold only space-time hull
 \EndFor
\EndFunction

 \end{algorithmic}
\end{algorithm}

With our four communication-avoiding techniques defined, we end up with a
cell-wise algorithm (Alg.~\ref{algorithm:algorithm-efficient-ADER-DG}).
Several properties arise directly.

\begin{corollary}
 \label{corollary:hull}
The persistent memory footprint of our algorithmic realisation variant is 
\[
  \frac{2+6d}{(d+1)(p+1)+1+6d} \leq 0.875
\]
of the footprint of a straightforward element-wise ADER-DG implementation
holding the output of the three ADER-DG steps explicitly.
\end{corollary}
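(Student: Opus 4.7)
The plan is to express both persistent memory footprints per cell in units of $m(p+1)^d$ using Table~\ref{table:memory-requirements}, take their ratio, and then verify the numerical upper bound at the extreme case.

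First I would enumerate what the straightforward element-wise variant keeps alive between grid sweeps. Since it materialises the outputs of \texttt{predict}, \texttt{extrapolate} and \texttt{solveRiemann} explicitly, per cell we accumulate $(d+1)(p+1)$ units for the space-time polynomial $(Q_h^*,F(Q_h^*))$, $4d$ units for the face extrapolations ($Q_h^*$ and $F(Q_h^*)\cdot n$ on both sides of each of the $2d$ faces), $2d$ units for the Riemann flux $F_h^*$ held separately, and $1$ unit for $Q_h$; the update overwrites $Q_h$ in place, so no separate $D_h$ is charged. This gives the denominator $(d+1)(p+1)+1+6d$.

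Next I would tally the footprint of Algorithm~\ref{algorithm:algorithm-efficient-ADER-DG}. By Theorem~\ref{theorem:memory} together with Techniques~\ref{technique:integrate-early} and \ref{technique:fuse}, the space-time polynomial is no longer persistent: \texttt{predict} produces it, \texttt{integrateVolume} and \texttt{extrapolate} consume it immediately within the same per-cell burst, and it is then discarded; Constraint~2(ii) guarantees this burst fits in cache. The hull ($4d$ units) and the Riemann fluxes ($2d$ units) remain exactly as before, but an additional persistent $D_h$ of one unit is now required to support the shifted evaluation and limiter roll-backs, which demand that the previous $Q_h$ survives---cf.~the operator factorisation~(\ref{equation:tasking:shift}). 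This yields $1+1+4d+2d = 2+6d$, i.e.\ the numerator, and hence the claimed ratio by construction.

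For the numerical bound I would observe that the ratio is monotonically decreasing in $p$, so its maximum is attained at $p=0$. Cross-multiplying $\frac{2+6d}{(d+1)(p+1)+1+6d} \leq \frac{7}{8}$ reduces to $9+6d \leq 7(d+1)(p+1)$, which at $p=0$ becomes $2 \leq d$ and is saturated exactly at $(d,p)=(2,0)$, where the ratio evaluates to $14/16 = 7/8$. The main obstacle is not the arithmetic but the rigorous justification that under the fused schedule the $(d+1)(p+1)$-sized space-time polynomial is truly transient rather than persistent; this is precisely the content of Theorem~\ref{theorem:memory} combined with Constraint~2(ii), so once these are invoked the remainder is bookkeeping.
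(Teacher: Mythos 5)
Your proof is correct and follows essentially the same accounting as the paper's own proof (given in the proof block following the corollary and spelled out in Appendix~C). Both you and the paper tally, in units of $m(p+1)^d$, the straightforward variant's persistent footprint as $Q_h$ ($1$ unit) $+\,\texttt{predict}$ output ($(d+1)(p+1)$ units) $+\,\texttt{extrapolate}$ output ($4d$ units) $+\,\texttt{solveRiemann}$ output ($2d$ units), with the update overwriting $Q_h$ in place, and the fused variant's footprint as $Q_h + D_h + \texttt{extrapolate} + \texttt{solveRiemann} = 2 + 6d$, the key observation being that the $(d+1)(p+1)$-sized space-time polynomial is transient under Technique~\ref{technique:integrate-early} while $D_h$ must now be stored separately to support roll-backs. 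You add one piece the paper leaves implicit: an actual derivation of the numerical bound $\leq 0.875$ by monotonicity in $p$ and cross-multiplication, correctly locating the extremum at $(d,p)=(2,0)$, i.e.\ the Finite Volume case admitted via Corollary~\ref{corollary:FV}; the paper only tabulates values for $p\geq 2$ without explaining where $0.875$ comes from, so this is a genuine (if small) improvement in rigour.
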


\begin{proof}
 A three-step, straightforward element-wise ADER-DG implementation
 \linebreak starts from $Q_h$ to span the space-time polynomial.
 An \texttt{STP} epilogue derives explicitly the extrapolated predicted solution
 plus its flux contribution along the faces.
 The Riemann solver determines numerical fluxes which are stored on the face
 over the whole time span of the adjacent cells.
 For local time stepping, the adjacent cell with the smaller span determines
 ``whole time span''.

 Technique \ref{technique:integrate-early} implies that we hold the space-time representation
 only temporarily as the outcome of \texttt{integrateVolume} is backed up in $D_h$.
 Also \texttt{integrateFace} directly accumulates all outputs into this array.
 $D_h$ has to be held separately from $Q_h$ as we have to give ADER-DG the
 opportunity to declare a time step as troubled and to fall back to FV.
 As \texttt{integrateVolume} and \texttt{integrateFace} are brought
 forward in simulation time, it is not possible to merge admissibility checks
 into the $D_h$ update, i.e.~$D_h$ cannot be eliminated.

 We combine these insights with the analysis from Table
 \ref{table:memory-requirements}.
\end{proof}

\noindent
%
It is the predictor's full space-time representation that we do not have to
hold.
We hold solely the space-time hull of each cell with additional flux data on
the space-time faces.
The memory pressure compared to a straightforward version reduces as 
the memory volume is shrinked.
Our approach falls into the first rubric of communication-avoiding techniques.

\begin{corollary}
\label{corollary:rerun-constraint}
Let $T_\textnormal{3\ steps}$ be the time required to run the triad of \texttt{STP},
\texttt{Riemann} and \texttt{Corrector} as three separate grid sweeps. Let
$T_\textnormal{STP}<T_\textnormal{3\ steps}$ be the time that is spent on a sole
\texttt{STP} sweep.
$T_\textnormal{fused}$ is the time
required to run one fused time step without any reruns.
Let the used $\Delta T = C_{\Delta T} \cdot \Delta T_\textnormal{adm}$ be damped
by $C_{\Delta T} \leq 1$.
Our communication-avoiding fused algorithm yields better time-to-solution than
a straightforward approach if
\[
1 \leq
C_\textnormal{rerun} <
1+
\frac{ T_\textnormal{3\ steps} \cdot C_{\Delta T} - T_\textnormal{fused} }
     { T_\textnormal{STP} }.
\]
\end{corollary}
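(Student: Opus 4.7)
}

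The plan is to compare the wall-clock times that the two variants need to cover a fixed simulated interval $[0,T_\textnormal{final}]$, and then extract the rerun bound by algebraic rearrangement. First I would charge an amortised per-realisation-time-step cost to each algorithm. For our fused implementation, the single-sweep cost is $T_\textnormal{fused}$; each failed optimistic guess adds one isolated \texttt{STP} sweep of cost $T_\textnormal{STP}$, and by the implication that follows Technique~4 no further reruns occur. Counting reruns into $C_\textnormal{rerun}$ in the sense of Theorem~\ref{theorem:weak-single-touch}, the amortised per-step cost of the fused variant is $T_\textnormal{fused}+(C_\textnormal{rerun}-1)\,T_\textnormal{STP}$, while the three-sweep straightforward variant always costs $T_\textnormal{3\ steps}$.

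Next I would argue that the effective simulated-time advance per realisation time step differs between the two variants. The straightforward implementation evaluates \texttt{calcTimeStep} inside its corrector sweep, so the admissible step $\Delta T_\textnormal{adm}$ is known before the next \texttt{STP} is launched and can be used without damping; its effective advance is therefore $\Delta T_\textnormal{adm}$. Our fused implementation, by contrast, is optimistic: $\Delta T_\textnormal{new}$ is fixed \emph{before} the next CFL evaluation, so the damping factor $C_{\Delta T}\leq 1$ is required and the effective advance per step is $C_{\Delta T}\cdot\Delta T_\textnormal{adm}$. Under the hypothesis of the corollary, these two rates are constant, so the numbers of realisation time steps needed to cover $[0,T_\textnormal{final}]$ are $N_\textnormal{std}=T_\textnormal{final}/\Delta T_\textnormal{adm}$ and $N_\textnormal{fused}=T_\textnormal{final}/(C_{\Delta T}\cdot\Delta T_\textnormal{adm})$ respectively.

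Multiplying step counts by per-step costs and demanding that our variant be strictly cheaper then gives
\[
  \frac{T_\textnormal{fused}+(C_\textnormal{rerun}-1)\,T_\textnormal{STP}}{C_{\Delta T}}
  \;<\;
  T_\textnormal{3\ steps},
\]
from which isolating $C_\textnormal{rerun}$ yields exactly the upper bound claimed. The lower bound $C_\textnormal{rerun}\geq 1$ is immediate, since even the best case (no rerun) still executes one full fused sweep, matching the $C_\textnormal{rerun}=1$ limit from Theorem~\ref{theorem:weak-single-touch}.

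The main obstacle is the justification of the \emph{asymmetric} appearance of $C_{\Delta T}$: it must be argued carefully that the damping is genuinely needed only in the optimistic variant, because only there does the prediction step precede the CFL reduction. A secondary subtlety is that $C_\textnormal{rerun}$ is defined as an amortised quantity, so the inequality is really a statement about averages over many time steps; one should invoke the smoothness assumption on $\Delta T_\textnormal{adm}$ implicit in the local-time-stepping discussion of Section~\ref{section:aderdg} to ensure that this amortisation is meaningful and that the per-step rates $\Delta T_\textnormal{adm}$ used above may be treated as constant across the run.
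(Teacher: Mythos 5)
Your proof is correct and takes essentially the same route as the paper's: charge $T_\textnormal{fused}+(C_\textnormal{rerun}-1)\,T_\textnormal{STP}$ per fused realisation time step, charge $T_\textnormal{3\ steps}$ per baseline step, require the total fused runtime to be smaller, and close with $N_\textnormal{fused}=N_\textnormal{3\ steps}/C_{\Delta T}$. The only difference is that the paper simply inserts that last step-count ratio as a fact, whereas you derive it from the observation that the straightforward variant knows $\Delta T_\textnormal{adm}$ before launching the next \texttt{STP} and so may advance by the full admissible step, while the optimistic fused variant must commit to a damped $C_{\Delta T}\cdot\Delta T_\textnormal{adm}$ in advance; this added justification is helpful and matches the paper's intent.
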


\begin{proof}
Our weak single-touch Theorem \ref{theorem:weak-single-touch} in combination
with Algorithm \ref{algorithm:algorithm-efficient-ADER-DG} clarifies that the
total runtime of our fused approach, including any reruns, equals \linebreak
$( (C_\textnormal{rerun}-1)T_\textnormal{STP} + T_\textnormal{fused} ) \cdot
N_\textnormal{fused}$. This runtime has to be smaller than $T_\textnormal{3\
steps} \cdot N_\textnormal{3\ steps}$ to make the reordering of steps pays
off.
$N_\textnormal{fused}$ is the number of time steps required by the fused
algorithm,
$N_\textnormal{3\ steps}$ is the number of time steps required by the
baseline.
To make the fusion pay off, the total time of the fused scheme has to be smaller
or equal to a straightforward implementation.
We obtain
\[
1 \leq
C_\textnormal{rerun} <
1+ \frac{ N_\textnormal{3\ steps}\cdot T_\textnormal{3\
steps}-N_\textnormal{fused} \cdot T_\textnormal{fused}} {N_\textnormal{fused}
\cdot T_\textnormal{STP}},
\]
with a trivial lower bound. We insert $N_\textnormal{fused} = \frac{1}{
C_{\Delta T} } N_\textnormal{3\ steps}$.
\end{proof}

\noindent
Corollary \ref{corollary:rerun-constraint} clarifies that
there exists a trade-off between the cost of the \texttt{STP} reruns and
the time step size damping:
The smaller we make $C_{\Delta T}$, the smaller
$C_\textnormal{rerun}$ but the larger $N_\textnormal{fused}$.
Balancing can be delicate.

Our gliding average in
Alg.~\ref{algorithm:algorithm-efficient-ADER-DG} induces
$
  N_\textnormal{fused} \geq \frac{1}{ C_{\Delta T} } N_\textnormal{3\ steps},
$
i.e.~we damp the effective $ \Delta T $ further. 
This inequality seems to harm the corollary's upper bound.
Yet, as long as the admissible time step size increases slowly or remains
invariant, the inequation holds trivially.
No reruns occur.
Occasional reductions of $\Delta T_\textnormal{adm}$ to no more
$C _{\Delta T}$ of the previous admissible time step size do not lead to reruns
and thus make the inequality hold more robustly.
The corollary quantifies how often the admissible time step size may decrease
more dramatically.

\begin{corollary}
\label{corollary:weak-single-touch}
 We assume that any individual tasks can complete in cache.
 We furthermore assume that data is completely removed from the cache if it is
 not reused a small, fixed number of algorithmic steps later by another task.
 The weak single-touch theorem alters the number of data reads
 and writes per time step by the ratio 
 \[
 0.875 \leq
 \frac{\left( 4\,d+2\right) \,C_\textnormal{rerun}+12\,d+1}{18\,d+4}
 \leq
 1.125.
 \]
\end{corollary}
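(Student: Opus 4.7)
The plan is to count main-memory transfers per cell per time step separately for the two algorithms of interest and then take their quotient, invoking the two stated cache assumptions to decide which intra-sweep traffic collapses into cache-resident reuse and which survives across algorithmic steps.

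First I would analyse the straightforward three-sweep variant. Using Table~\ref{table:memory-requirements}, every task's input must either come from memory or from an output written in the same sweep; likewise every output must be written back whenever a subsequent sweep consumes it. Accounting for $Q_h$, the space-time hull that Corollary~\ref{corollary:hull} identifies as persistent, the face projections of \texttt{extrapolate}, the numerical flux of \texttt{solveRiemann}, and the $D_h$ buffer demanded by the rollback logic, and attributing shared face entities per cell consistently with the normalisation of the table, the totals telescope into $18d+4$ units of $m(p+1)^d$ per cell per time step.

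Next I would repeat the exercise for Algorithm~\ref{algorithm:algorithm-efficient-ADER-DG}. Technique~\ref{technique:integrate-early} removes the space-time polynomial from the persistent set because \texttt{integrateVolume} consumes it inside the same \texttt{STP} block; Technique~\ref{technique:different-tasks} keeps the output of \texttt{solveRiemann} in cache until \texttt{integrateFace} fires a few instructions later; and Technique~\ref{technique:fuse} hands the freshly updated $Q_h$ directly to \texttt{predict} without a memory round trip. What remains is the hull traffic of Corollary~\ref{corollary:hull} together with $Q_h$, $D_h$ and the face payloads---giving a single-touch base cost of $12d+1$. Each optimistic rerun replays \texttt{predict}, \texttt{extrapolate} and \texttt{integrateVolume}, adding $4d+2$ transfers; multiplying by $C_\textnormal{rerun}$ delivers the numerator $(4d+2)C_\textnormal{rerun}+12d+1$.

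Finally I would establish the numerical bounds. The ratio is affine and monotonically increasing in $C_\textnormal{rerun}\in[1,2]$, so its extrema sit at the endpoints, giving the two one-parameter functions $(16d+3)/(18d+4)$ and $(20d+5)/(18d+4)$. A short monotonicity check in $d$ (or direct substitution of $d\in\{2,3\}$, which are the only cases the paper targets) shows the first is minimised at $d=2$ with value $7/8$ and the second is maximised at $d=2$ with value $9/8$, yielding the stated $0.875$ and $1.125$.

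The main obstacle is the bookkeeping, not the arithmetic. Faces are shared, so per-cell attribution of \texttt{solveRiemann} and \texttt{integrateFace} traffic must be done consistently with the table's convention; \texttt{integrateFace} accumulates into $D_h$ rather than overwriting it, so its memory footprint has to be charged carefully; and because \texttt{Corrector} of time step $n$ and \texttt{STP} of time step $n{+}1$ are fused into a single cell visit, one must be careful not to double-count $Q_h$ or the face payloads when deciding what crosses the cache boundary. Once this accounting is fixed, everything reduces to counting entries in Table~\ref{table:memory-requirements} and a one-variable optimisation.
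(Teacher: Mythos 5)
Your proposal follows the same route as the paper: tally per-cell memory transfers from Table~\ref{table:memory-requirements} for both the three-sweep baseline (already assuming the hull-only storage of Corollary~\ref{corollary:hull}) and the fused Algorithm~\ref{algorithm:algorithm-efficient-ADER-DG}, use the two cache assumptions to decide which inputs survive in cache, and form the quotient. Your count of $18d+4$ for the baseline matches the paper exactly, and your numerator $(4d+2)C_\textnormal{rerun}+12d+1$ is the same expression; the paper derives it as $16d+3$ plus $(C_\textnormal{rerun}-1)(4d+2)$ for the reruns, whereas you split it as a ``predictor-free'' base of $12d+1$ plus $C_\textnormal{rerun}$ predictor passes at $4d+2$ each. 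The two decompositions are algebraically identical, though your narrative slightly overcharges the first predictor pass (which is fed from cache and hence only costs $4d+1$, not $4d+2$) and compensates by taking a base of $12d+1$ instead of $12d+2$; this is a presentation quirk, not an error. Your explicit monotonicity check in $C_\textnormal{rerun}$ and in $d$ to pin the $0.875$ and $1.125$ bounds at $(C_\textnormal{rerun},d)=(1,2)$ and $(2,2)$ is actually more careful than the paper, which states the bounds without justifying them.
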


\begin{proof}
 The \texttt{STP} reads in the old time step, and determines all data along
 the cell's hull (Corollary \ref{corollary:hull}).
 The Riemann solver reads in ${Q_h^*} ^{\pm}$ and $F(Q_h^*) ^{\pm} \cdot n$ and
 writes out $F_h^*$ into both $F(Q_h^*) ^{\pm} \cdot n$ arrays.
 Neglecting the domain boundary, 
 there are $d$ faces per cell.
 The \texttt{Corrector} reads in the face data---the \texttt{integrateVolume}
 already has been anticipated by the \texttt{STP}---and performs the remaining
 tasks. We have
 \begin{equation*}
 \begin{gathered}
  \underbrace{8dm(p+1)^d}_{\texttt{Riemann}\,\mbox{reads}}
  +
  \underbrace{4dm(p+1)^d}_{\texttt{Riemann}\,\mbox{writes}}
  + 
  \underbrace{2dm(p+1)^d+m(p+1)^d}_{\texttt{Corrector}\,\mbox{reads}}
  + 
  \underbrace{m(p+1)^d}_{\texttt{Corrector}\,\mbox{writes}}
  + \\
  \underbrace{m(p+1)^d}_{\texttt{STP}\,\mbox{reads}}
  + 
  \underbrace{4dm(p+1)^d+m(p+1)^d}_{\texttt{STP}\,\mbox{writes}}.
 \end{gathered}
 \end{equation*}
 
 \noindent
 Our fused scheme kicks off a grid traversal with the \texttt{Riemann} reads.
 We assume that a cache-oblivious or cache-aware traversal order is chosen
 \cite{bungartz_pde_2010}.
 In our case, we exploit the Peano space-filling
 curve \cite{bader_space_filling_2013}.
 \texttt{Riemann} writes thus
 remain in cache for the subsequent correction. 
 Hence, the \texttt{Corrector} solely has to read in the $Q_h^*$ values
 stored in the $D_h$ data structure to determine the next solution (written to memory). 
 This output is handed over to the next \texttt{Predictor} immediately, i.e.~the
 \texttt{Predictor} works in cache and does not have to reload data. 
 It writes out the cell's space-time hull.
 If the optimistic time stepping runs into a rerun, we have to reload the old
 time step and perform the prediction again:
 \begin{equation*}
 \begin{gathered}
  \underbrace{8dm(p+1)^d}_{\texttt{Riemann}\,\mbox{reads}}
  +
  \underbrace{4dm(p+1)^d}_{\texttt{Riemann}\,\mbox{writes}}
  +
  \underbrace{m(p+1)^d}_{\texttt{Corrector}\,\mbox{reads}}
  +
  \underbrace{m(p+1)^d}_{\texttt{Corrector}\,\mbox{writes}}
  +\\
  \underbrace{4dm(p+1)^d+m(p+1)^d}_{\texttt{STP}\,\mbox{writes}}
  +
  (C_\textnormal{rerun}-1) \cdot \left(
    \underbrace{m(p+1)^d}_{\texttt{STP}\,\mbox{reads}}
    +
    \underbrace{4dm(p+1)^d+m(p+1)^d}_{\texttt{STP}\,\mbox{writes}}
  \right).
 \end{gathered}
 \end{equation*}$~$
\end{proof}

\noindent
The memory pressure of fused time stepping compared to a straightforward version
reduces if $C_\textnormal{rerun}$ remains reasonably small:
we transfer data less frequently.
Our approach falls into the second rubric of 
communication-avoiding techniques, and, from a memory point of view, also
fits rubric five.
Both estimates are too pessimistic for global time stepping.
Here, it is convenient to run \texttt{integrateFace} directly after the
actual Riemann solver.
Explicit spanning (and storing) the Riemann solution $F_h^*$ over the time
span is not required.

\begin{corollary}
 \label{corollary:concurrency-homogenisation}
 Our fused and shifted approach homogenises the concurrency level and reduces
 memory access bursts. Yet, it reduces the overall concurrency level slightly. 
\end{corollary}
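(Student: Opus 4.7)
The plan is to compare the concurrency profile $P(t)$ and the memory-bandwidth profile $B(t)$ of the baseline Alg.~\ref{algorithm:time-stepping}~(left) against our fused/shifted Alg.~\ref{algorithm:algorithm-efficient-ADER-DG}, and to read off the three claims from the shape of these profiles.

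First I would characterise the baseline. The straightforward realisation partitions each time step into three successive loops, separated by barriers. Within each loop only one task type is live, so $P(t)$ exhibits three disjoint plateaus: the first of height up to $|\mathbb{T}|$ (all \texttt{STP}s can run concurrently), the second of height up to the number of interior faces (all \texttt{Riemann}s), the third again of height up to $|\mathbb{T}|$ (all \texttt{Corrector}s). Because the task types are not mixed, Table~\ref{table:memory-requirements} gives a piecewise constant $B(t)$ with different heights per plateau: a large write burst at the end of \texttt{STP} (the space-time hull), a large read burst at the start of \texttt{Riemann} (four face traces per face), and another read burst at the start of \texttt{Corrector}. Barriers force $P(t)=0$ and $B(t)=0$ in-between. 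Both profiles are therefore spiky, which is exactly the ``memory access burst'' behaviour we wish to flatten.

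Next I would analyse the fused and shifted sweep. By Technique~\ref{technique:different-tasks} and the task-fusion in Technique~\ref{technique:fuse}, every cell $c$ visited by the traversal triggers, in order, those \texttt{solveRiemann}s on its adjacent faces that have not yet been executed, then \texttt{integrateFace} and \texttt{update} using $\Delta T_\textnormal{old}$, and finally \texttt{predict}, \texttt{extrapolate} and \texttt{integrateVolume} using $\Delta T_\textnormal{new}$. At any instant during the sweep the in-flight task set is a window of cells around the current traversal position, each of which may be executing any of the three task types; the dependency constraints~\eqref{equation:tasking:dependencies} are satisfied because \texttt{Riemann} on a face is only issued when both adjacent cells' \texttt{STP} of the \emph{previous} sweep has completed (that is the role of the half-sweep shift in Technique~\ref{technique:shift}). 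Consequently $P(t)$ takes an essentially constant value across the sweep, equal to the concurrency width afforded by the space-filling-curve traversal, and $B(t)$ is the sum of small per-cell reads and writes that, by Table~\ref{table:memory-requirements}, balance a compute-heavy \texttt{predict} against bandwidth-heavy \texttt{Riemann}/\texttt{Corrector} reads. This proves the first two claims: both the concurrency level and the memory traffic are homogenised over the whole grid sweep.

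Finally, for the slight reduction in overall concurrency, I would point out that the peak heights of the baseline's $P(t)$ plateaus are bounded by $|\mathbb{T}|$ (or the face count), whereas in the fused sweep the instantaneous $P(t)$ is bounded by the number of cells simultaneously on the active frontier of the traversal. Because the partial order \eqref{equation:tasking:dependencies} chains \texttt{STP}$\sqsubseteq_\textnormal{before}$\texttt{Riemann}$\sqsubseteq_\textnormal{before}$\texttt{Corrector}, a cell may not advance past its face neighbours' \texttt{STP} result, which makes the attainable frontier strictly smaller than $|\mathbb{T}|$ on any non-trivial mesh. The reduction is only slight, however: the baseline's time-averaged concurrency is already brought down by its barriers, and our profile holds its moderate peak for the whole sweep duration, so the integrated work $\int P(t)\,dt$ is essentially preserved.

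The main obstacle I anticipate is making the ``slight'' reduction quantitative: a rigorous bound would require us to fix a concrete traversal (e.g.\ the Peano curve of \cite{bader_space_filling_2013}) and to bound its frontier width, and to model the relative task costs from Table~\ref{table:memory-requirements}. At the level of a structural corollary, however, the qualitative frontier argument above is sufficient, and the other two claims follow directly from the mixing of task types guaranteed by Techniques~\ref{technique:different-tasks}, \ref{technique:shift} and \ref{technique:fuse}.
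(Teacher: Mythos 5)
Your overall strategy (compare the concurrency and memory profiles of the baseline against the fused sweep) is in the spirit of the paper's proof, and your treatment of the first two claims (homogenisation of concurrency and of memory traffic) is fine as a qualitative argument. But the mechanism you give for the third claim — the ``slight reduction in overall concurrency'' — is not the one the paper uses, and I think yours misses the actual point.

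You attribute the reduction to the frontier constraint inside the fused sweep: a cell cannot advance past its neighbours' Riemann solves, so the active frontier is strictly smaller than $|\mathbb{T}|$. The paper instead attributes the fused variant a concurrency of roughly $|\mathbb{T}|$ (dismissing the frontier effect as ``only a brief startup cost'', since the arithmetically heavy \texttt{predict} tasks overlap the short Riemann solves), and locates the reduction in the \emph{baseline} side of the comparison: in the straightforward code, the \texttt{Riemann} sweep has concurrency on the order of $d\cdot|\mathbb{T}|$ (one task per face rather than per cell), so the time-averaged concurrency is
\[
 CL_{avg} \approx \left( d+(1-d)\frac{ T_\textnormal{STP} }{ T_\textnormal{3\ steps} } \right) |\mathbb{T}| \;\geq\; |\mathbb{T}|
 \qquad (d\in\{2,3\}),
\]
which strictly exceeds the fused variant's $\approx|\mathbb{T}|$. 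That is the whole quantitative content of the ``slight'' reduction. Your argument never accounts for the higher face-level parallelism of the baseline's \texttt{Riemann} phase, and it also assumes $P(t)=0$ between baseline plateaus, which contradicts the paper's model (there the barriers are instantaneous and the average is computed over the three nonzero plateaus only). In addition, your claim that the fused frontier is ``strictly smaller than $|\mathbb{T}|$'' runs counter to the paper's explicit statement $CL_{avg}^{\text{fused}}\approx|\mathbb{T}|$, so the frontier argument, besides being unquantified, points in the wrong direction. To repair the proof you should drop the frontier mechanism and instead compare $|\mathbb{T}|$ for the fused traversal against the weighted average over the baseline's \texttt{STP} and (higher-concurrency) \texttt{Riemann}/\texttt{Corrector} plateaus, observing that the weight $(1-T_\textnormal{STP}/T_\textnormal{3 steps})$ is nonzero.
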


We assume that we traverse the grid $\mathbb{T}$ with multiple threads.
With multithreading, we have to ensure that no two faces in Algorithm
\ref{algorithm:algorithm-efficient-ADER-DG} are subject to the Riemann solver
concurrently or ``solved'' twice.

\begin{proof}
Running ADER-DG's three steps one after another means that we run into step
\texttt{STP} with concurrency level $|\mathbb{T}|$ and very high arithmetic
intensity, then perform the Riemann solves with concurrency $|d \cdot
\mathbb{T}|$, and finally run again a step with level $|\mathbb{T}|$. 
$\frac{ T_\textnormal{STP} }{ T_\textnormal{3\ steps} }$ quantifies the fraction
of the runtime spent in the predictor.
For a straightforward implementation, we thus obtain a time-averaged concurrency level of 
\[
  CL_{avg} \approx \frac{ T_\textnormal{STP} }{ T_\textnormal{3\ steps} } \cdot
  |\mathbb{T}| + (1-\frac{ T_\textnormal{STP} }{ T_\textnormal{3\ steps} })\, d \cdot |\mathbb{T}| 
    = \left( d+(1-d)\frac{ T_\textnormal{STP} }{ T_\textnormal{3\ steps} } \right) |\mathbb{T}|.
\]
Our shifted and fused variant exhibits an overall homogeneous concurrency level
of roughly $|\mathbb{T}|$: 
We merge all three algorithmic steps, i.e.~process all the cells
in parallel.
Per cell, we have to ensure all faces have been subject to the Riemann solve
first.
This induces only a brief startup cost however, since
arithmetically heavy \texttt{predict} tasks are
overlapped with the Riemann solves.
$CL_{avg} \geq |\mathbb{T}|$.
\end{proof}


\noindent
Our approach falls into the third rubric of communication-avoiding techniques
as bursts of data reads or writes over the bus are avoided.
We note that the difference in concurrency levels disappears for increasing
\texttt{STP} cost and that a high concurrency level for low-cost tasks does not
automatically induce high speedups.

\begin{corollary}
 In a non-overlapping domain decomposition,
 we can hide data exchange behind the traversal.
\end{corollary}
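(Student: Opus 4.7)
The plan is to exploit the shifted evaluation of Algorithm \ref{algorithm:algorithm-efficient-ADER-DG} together with the cell-ordering induced by the traversal, so that every inter-rank transfer is posted as a non-blocking message and drained by the interconnect while local compute-heavy work continues on the current sweep.

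First, I would pin down what has to cross a subdomain boundary. For every face $f$ shared between ranks $r$ and $r'$, the relevant payload is the pair $Q_h^{*\pm}$ and $F(Q_h^*)^{\pm}\cdot n$ of cardinality $\mathcal{O}(m(p+1)^{d-1})$, produced by \texttt{extrapolate} at the tail of a cell visit and consumed by \texttt{solveRiemann} at the head of the next sweep's first touch of $f$. Since the decomposition is non-overlapping, no other data flows between ranks, and the task dependencies (\ref{equation:tasking:dependencies}) across a rank boundary reduce to this single face coupling.

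Second, I would attach the communication to the traversal itself. At the start of each sweep on rank $r$ I post an \texttt{MPI\_Irecv} for every incoming boundary face, and immediately after \texttt{extrapolate} fires on a rank-boundary cell I post the matching \texttt{MPI\_Isend} to $r'$. All subsequent cell visits in the sweep perform a full \texttt{predict}, which by Corollary \ref{corollary:weak-single-touch} and the cost dominance asserted in the motivation of Constraint 1 accounts for the bulk of the time per cell; this is the window over which the in-flight messages are drained. Before the next sweep runs \texttt{solveRiemann} on the first of its rank-boundary faces a single \texttt{MPI\_Wait} completes the handshake. Because the Peano space-filling curve visits any cell with a bounded prefix from sweep start, the wait falls on an already-delivered message as soon as the aggregated interior compute time exceeds the wire latency, which is precisely the regime our high-order \texttt{STP} is designed for.

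The main obstacle is the optimistic rerun triggered by $\Delta T_\textnormal{adm}<\Delta T_\textnormal{new}$: face data already dispatched during a speculative sweep is invalidated and a second exchange is required. I would handle this either by deferring the \texttt{MPI\_Isend} behind the admissibility check, paying a modest hiding loss, or by eagerly sending and shipping a compact correction on the (by assumption rare) rerun, whose amortised frequency is bounded by Corollary \ref{corollary:rerun-constraint}. Local time stepping is handled by the same argument applied per level, plus the bookkeeping that associates partial Riemann contributions with their coarse neighbour across the three sub-steps forced by tripartitioning; the hiding itself still reduces to overlapping \texttt{STP}-dominated compute windows with the asynchronous face transfers.
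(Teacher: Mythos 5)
Your proof is correct in spirit but argues along a genuinely different line than the paper. The paper's proof is a \emph{relative slack comparison}: it observes that the straightforward three-sweep implementation gives a message at most $\min(T_\textnormal{STP},T_\textnormal{fused}-T_\textnormal{STP})$ to travel (the first message sent out can travel until the first Riemann solve, i.e.\ up to $T_\textnormal{STP}$; the last message has at most the remaining time), whereas the fused and shifted algorithm---because \texttt{extrapolate} fires at the end of one sweep and the matching \texttt{solveRiemann} fires at the start of the next---grants a slack bounded by the whole $T_\textnormal{fused}$. It then closes with the elementary inequality $\min(T_\textnormal{STP},T_\textnormal{fused}-T_\textnormal{STP}) \leq T_\textnormal{fused}$, so the fused variant is at least as good at hiding communication as the (implicitly assumed successful) straightforward variant. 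You instead give an \emph{absolute, constructive} argument: you specify an \texttt{Irecv}/\texttt{Isend} protocol keyed to the traversal, observe that only the extrapolated face tuples cross rank boundaries, and argue the compute window provided by \texttt{predict} exceeds the wire latency. Your route buys concreteness (a reader knows exactly where to post the nonblocking calls) and explicitly treats the optimistic rerun, which the paper's proof ignores entirely; the paper's route is shorter and avoids asserting anything quantitative about wire latency by reducing to a comparison with the baseline.

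Two small corrections. First, the per-face payload is a space-\emph{time} quantity: each of the $2d$ faces carries $2m(p+1)^{d}$ doubles per \texttt{extrapolate} (cf.\ Table~\ref{table:memory-requirements}), not $\mathcal{O}(m(p+1)^{d-1})$; the exponent $d-1$ is the spatial cardinality only and drops the temporal factor $(p+1)$. This does not harm the argument, but it should be $\mathcal{O}(m(p+1)^{d})$. Second, your appeal to the Peano curve's "bounded prefix" property does not by itself guarantee sufficient slack: rank $r$ and rank $r'$ traverse different subsets of $\mathbb{T}$, so nothing forces the receiving rank to encounter a given boundary face late in its sweep. The paper dodges this by writing "if we choose a proper task ordering on each rank"; your text should likewise make explicit that boundary faces are to be deprioritised (or interior \texttt{STP} work prioritised) on the receiving side---the paper's own conclusion even flags exactly this as the practically critical ordering.
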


\noindent
The distributed-memory parallelisation of ADER-DG with message passing is
conceptionally simple.
We rely on a non-overlapping domain decomposition where the Riemann problems
along the parallel subdomain boundaries are solved redundantly.
Each rank sends out its projected $Q_h^*$ and $F(Q_h^*)\cdot n$ values to the
neighbouring ranks once the \texttt{extrapolate} task has terminated.
All other operations are rank-local.

\begin{proof}
We restrict to a machine model where the ADER-DG steps are synchronised and 
reiterate that \texttt{predict} is the
arithmetically most demanding task.
It implies that messages holding extrapolated data have, in a straightforward
implementation, up to $\min(T_\textnormal{STP},T_\textnormal{fused}-T_\textnormal{STP})$ time to arrive at their
destination rank, if we choose the task ordering optimally and manage to exchange all data non-blocking.
The first message sent out has a maximal arrival time (slack) of $T_\textnormal{STP}$. 
The last predicted value will the latest be required by the last Riemann
solve.
The time span in-between is bounded by $T_\textnormal{fused}-T_\textnormal{STP}$.

For our fused and shifted algorithm, we may assume that the slack given to a
message to run through the system is bounded by $T_\textnormal{fused}$ if we choose a
proper task ordering on each rank.
The statement holds as
\begin{eqnarray*}
  T_\textnormal{STP} \leq T_\textnormal{fused} \Rightarrow \min(T_\textnormal{STP},T_\textnormal{fused}-T_\textnormal{STP}) \leq
  T_\textnormal{fused}.
\end{eqnarray*}

\noindent
We could harm the right condition iff $T_\textnormal{STP}>\frac{1}{2}T_\textnormal{fused}$ and
$T_\textnormal{fused}-T_\textnormal{STP} > T_\textnormal{fused}$.
However,
$T_\textnormal{fused}-T_\textnormal{STP} > \frac{1}{2}T_\textnormal{fused} > T_\textnormal{fused}$ cannot hold.
\end{proof}

\noindent
Our approach falls into the fourth rubric of
communication-avoiding algorithms, as we reduce the criticalness of
 data communication.
Rubric five applies automatically.
Localisation of data transfer (rubric five)
is implicitly given by ADER-DG---as for any explicit scheme---as information
almost exclusively is transferred through the faces.

\section{Results}
\label{section:results}

We conducted our experiments on an Intel
E5-2650V4 (Broadwell) cluster with 24 cores per node.
They run at 2.4 GHz and are connected by Omnipath.
All hardware counters have been evaluated through Likwid \cite{Treibig:10:Likwid}.
Furthermore, we ran our experiments on KNL processors (Xeon Phi 7210) with 64
cores each.
They run at 1.30 GHz.
For the shared memory parallelisation, we rely on Intel's Threading Building
Blocks (TBB) while Intel MPI is used for the distributed memory parallelisation.
Intel's 2017 C++ compiler translated all codes.

%
%
Our runtime analysis focuses on compressible Euler equations with
$m=5, B=S=0$ and no $\delta$ impact in
\eqref{equation:introduction:PDE}.
We study
\[
\frac{\partial}{\partial t} Q
+
\nabla \cdot F(Q)
= 0
\  \mbox{with} \ 
Q = \begin{pmatrix}
\rho\\j\\\ E
\end{pmatrix},
F = 
\begin{pmatrix}
j\\
\frac{1}{\rho} j \otimes j + p I \\
\frac{1}{\rho} j \, (E + p)
\end{pmatrix},
\
p=0.4\left(E-\frac{1}{2\rho} (j,j) \right)
.
\]

\noindent
$\rho, E: \Omega \mapsto \mathbb{R}$ encode the scalar density and the 
energy, while the vector $j: \Omega \mapsto \mathbb{R}^d$ holds the medium's
velocities.
A pressure $p$ calibrates the whole setup.
$\otimes$ is the outer dot product.

For communication-avoiding algorithms, this is a challenging
setup as the PDE is simple. Its arithmetic intensity 
per ADER-DG step is low compared to more complicated PDEs such as seismic or
gravitational waves.
No implementational and communication flaws are hidden by computations.

\subsection{Hardware counter}

\begin{table}
\caption{
  Performance counters tracking the memory bus usage for a $27 \times 27 \times
  27$ grid on one node.
  Smooth initial conditions ensure that $C_\textnormal{rerurn}=1$.
  The upper part holds data from the straightforward implementation with three
  distinct steps, i.e.~$T=T_\textnormal{3\ steps}$.
  The lower part holds data from our fused approach ($T=T_\textnormal{fused}$).
  Both approaches follow Corollary \ref{corollary:hull}.
  The bandwidth (BW) is given as MB/s, the volume (Vol.) transferred is given in
  GB.
  All timings are time per time step with $[T]=s$.
  \label{table:hw-counters}
}
\begin{center}
\footnotesize
\begin{tabular}{c rrr rrr rrr}
&
\multicolumn{3}{c}{1 core} & 
\multicolumn{3}{c}{12 cores} &
\multicolumn{3}{c}{24 cores} \\
\cmidrule(lr){2-4}
\cmidrule(lr){5-7}
\cmidrule(lr){8-10}
$p$ & BW & Vol. & $T$ & BW & Vol. & $T$ & BW &
Vol. & $T$
\\
\midrule
3 & 1,357.44 & 30.59 & $1.36$ & 2,885.94 & 35.24 & $0.75$ & 4,639.00 & 
73.32 & $0.71$ \\
5 & 1,155.42 & 101.64 &$3.91$& 4,158.00 & 117.06 & $1.25$ & 6,377.94 & 
223.06 & $1.04$ \\
7 & 806.87 & 215.91 & $14.75$ & 5,695.60 & 285.40 & $2.24$ & 8,540.10 & 
520.33 & $1.83$ \\
9 & 483.04 & 487.98 & $29.20$ & 20,894.15 & 4,376.39 & $4.66$ & 30,938.36 &
4,716.02 & $3.76$ \\
\midrule
3 & 1,233.97 & 24.05 &$1.14$ & 3,645.78 & 31.58 & $0.39$ & 5,481.20 & 
71.92 & $0.39$ \\
5 & 861.10 & 80.49 &$4.18$& 5,931.70 & 110.40 &$0.68$& 8,403.62 & 
211.44 & $0.57$ \\
7 & 625.40 & 176.84 &$10.71$& 6,877.66 & 350.95 &$1.98$& 9,003.53 & 
621.64 & $1.50$ \\
9 & 429.35 & 434.20 &$25.96$& 17,525.00 & 4,619.24 & $4.80$ & 27,297.87 &
5,280.57 & $4.32$ \\
\end{tabular}
\end{center}
\end{table}

%
%

%
%
We start our experiments with studies of the hardware counters on a single core
(Table \ref{table:hw-counters}).
A straightforward implementation with three grid sweeps per ADER-DG 
time step acts as baseline.
It already stores solely the space-time cell hull.
Fusion of all ADER-DG steps into a single-touch code reduces the
data running through the memory bus to 0.79--0.89 of the baseline.
This reduction translates into diminished bandwidth
requirements.
Besides one outlier for $p=5$ where the bandwidth drops dramatically---we are
not able to identify the reason for this---the application of our techniques
speeds up the code.

If we use one socket or all cores of the node, our optimisations increase the amount of doubles transferred over the
bus once $p \geq 7$.
Parallel to this, the used bandwidth now increases.
Besides for $p=9$, fusion rather robustly speeds up the code; the poor 
scalability is a strong scaling effect and scalability overall is subject of
study next.
In all of our experiments, the L2 cache miss rate of the optimised code variant
is in the order of 2\% wich occasional outliers up to 4.92\%.
The L3 miss rate is 0.1\% at most.
While cache misses basically do not exist, we see a rapid increase in the L3
access bandwidth for the multicore experiments once we increase $p$ beyond 5.
With Stream TRIAD \cite{McCalpin:95:Stream} yielding around 17,774.3 MB/s on a
single core and 121,495.2 MB/s on all 24 cores, our code is not bandwidth-bound.
Besides for $p=3$, our optimisation techniques increase the MFLOPS/s by a factor
of 1.7--2. 
The growth in MFLOPS/s scales with $p$.

%
%
%

Corollary \ref{corollary:weak-single-touch} predicts for
$C_\textnormal{rerun}=1$ and $d=3$ a reduction of the memory transfer demands to
0.775. 
This matches our single core observations.
In general, the corollary predicts a reduction of the memory volume for 
$C_\textnormal{rerun}<1.5$ as long as we assume that the idea behind  
Corollary \ref{corollary:hull} has been applied.
If we made the comparison's baseline, i.e.~the three step implementation, hold
the complete space-time polynomials and fluxes too, the single-touch reordering
reduced the transferred data for all $C_\textnormal{rerun} \leq 2.0$.

The caveat in practice is Corollary \ref{corollary:rerun-constraint}.
For the present setups and a $C_{\Delta T}=0.99$, we experimentally obtain
maximum values of $C_\textnormal{rerun}$ between 1.21 ($p=3$) down to 1.01
($p=7$) and 0.98 ($p=9$) on a single core.
These values demonstrate that reruns have to be avoided the more rigorously the
higher the cost of the predictor.
Yet, naively reducing the safety value $C_{\Delta T}$ does not work either.
If $C_{\Delta T}$ is chosen too pessimistic, i.e.~too small, the fusion even
might not pay off at all.
The optimistic time stepping requires too many
additional steps through the safety factor (cmp.~data for $p=9$).
To make our approach perform, a user has to balance 
$C_{\Delta T}$ and ensure that the number of reruns remains below a problem- and machine-specific
threshold.
Fortunately, this upper threshold on $C_\textnormal{rerun}$ increases if we
increase the core count as $T_\textnormal{STP}$ is almost embarrassingly scaling.

%
%
The merging and reordering of the algorithmic steps in combination with an
SFC-based grid traversal which is spatially and temporally localised \cite{Weinzierl:17:Peano}
plus the storage of solely the space-time hull yields a cache-oblivious code.
The MFLOPS/s increase due to the fusion of steps and the optimistic
time stepping.
Though our techniques are designed to optimise the memory access, 
they allow the compiler to exploit the vector facilities better;
even for our PDE with low arithmetic intensity.
They team up with the common knowledge that we have to increase the polynomial order to
increase the arithmetic intensity.

The optimisations' homogenisation of the concurrency level and the storage of
solely the cell hull are a double-edged sword. 
\texttt{STP} temporarily has to hold all space-time unknowns per cell. 
If the code runs in parallel, each thread has to allocate these temporary data.
The memory demands scale with both $p$ and the core count.
Though the code exhibits advantageous cache access characteristics, some data
now is evacuated from the high-level and last-level caches and then brought in
again.
For $p=9$, solely $T_\textnormal{STP}$ dominates the runtime.
Fusing tasks and thus intermixing and nonhomogenising memory accesses here makes
the code become inferior to a non-fused approach where \texttt{STP} streams data
with quasi-uniform operations through the cores.
With increasing last level cache sizes as they come along with an KNL, e.g.,
this performance ``anomaly'' will disappear.

\subsection{Shared memory scalability}

\begin{figure}
 \centering
   \includegraphics[scale=0.89]{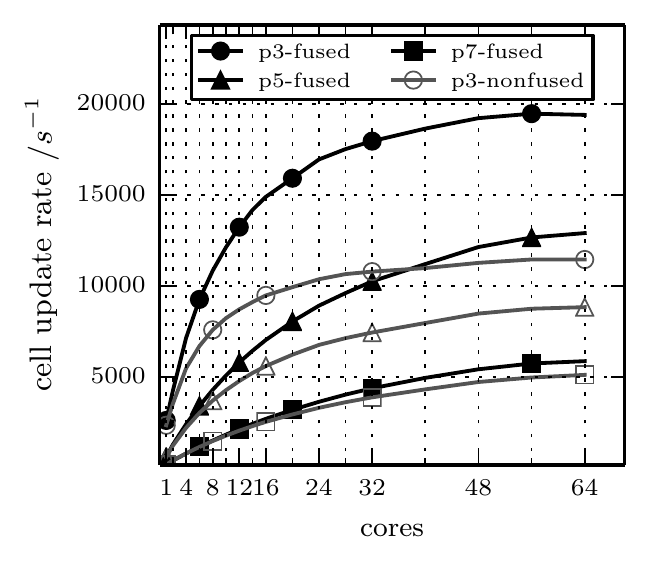}
   \includegraphics[scale=0.89]{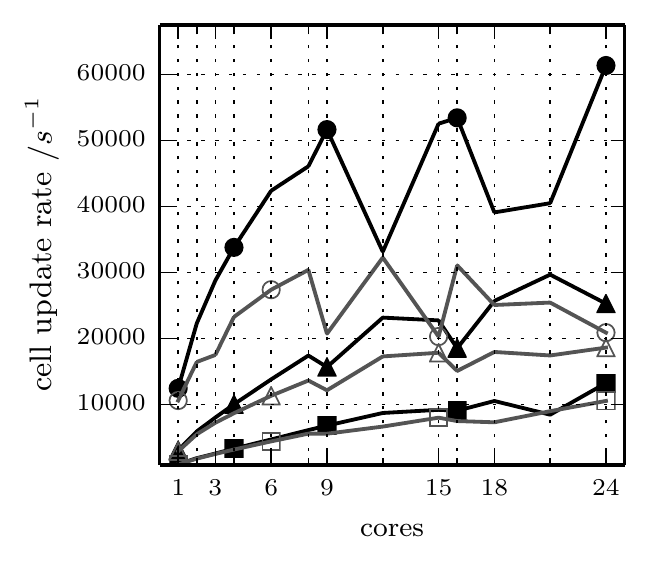}  
 \caption{Characteristic shared memory profiles for $d=3$ on KNL (left) and Broadwell
  (right).
  \label{figure:shared-mem}
 }
\end{figure}

%
%
We continue with scalability tests on single nodes
(Fig.~\ref{figure:shared-mem}) where we apply
sufficiently smooth initial conditions plus a $C_{\Delta T}$ to ensure that
$C_\textnormal{rerun}=1$.
We omit rerun effects.
%
%
Our techniques yield, for most of the setups,
better runtimes than their straightforward counterpart using three grid sweeps
for the three algorithmic steps.
On the multicore chip, the data are ragged while the manycore yields 
smooth curves.
Both curves show a widening speed gap, i.e.~the more cores we use the more
important our optimisations.
Yet, the higher the polynomial degree $p$, the less important our
optimisations.

%
%
A straightforward implementation exhibits three synchronisation
points per realisation time step with two of the phases in-between
(\texttt{Riemann} and \texttt{Corrector}) being cheap.
Relative to \texttt{STP}, they become almost serial tiny phases
for growing polynomial order.
Amdahl's law clarifies that the two cheap phases tend to throttle the
upscaling.
Our techniques eliminate two out of three synchronisation points per realisation
time step.
This way, they make the code scale better.
The gap between the two implementation variants widens.
Yet, the scaling improvement is constrained by the reduced concurrency level
from Corollary \ref{corollary:concurrency-homogenisation}.
The closer $T_\textnormal{STP}$ to $T_\textnormal{3 steps}$, i.e.~the higher
$p$, the harsher the reduction of the concurrency level.
The relative gain in performance through the optimisation techniques reduces
with growing polynomial degree.

%
%
On both architectures, our techniques pay off most for the smaller
$p$ choices---in particular for Finite Volumes---for tiny
time step sizes where few Picard iterations are required, and for linear variants of (\ref{equation:introduction:PDE}) where the
cost ill-balance between \texttt{STP} and other phases is not too significant.

\subsection{Distributed memory characteristics}

\begin{figure}
 \centering
   \includegraphics[scale=0.89]{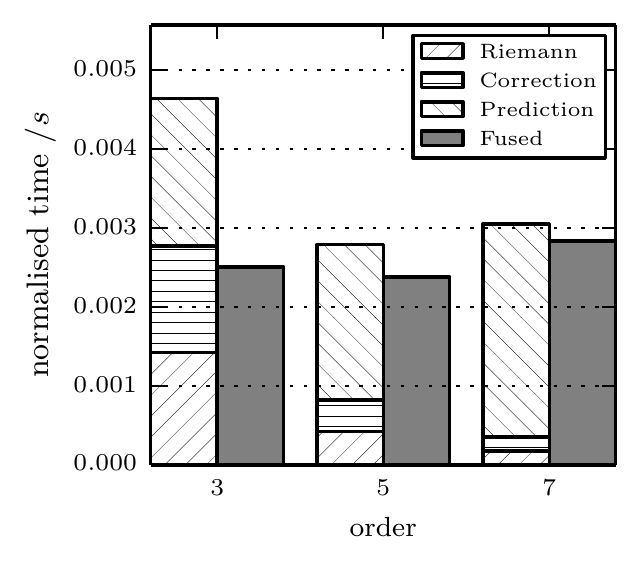}
   \includegraphics[scale=0.89]{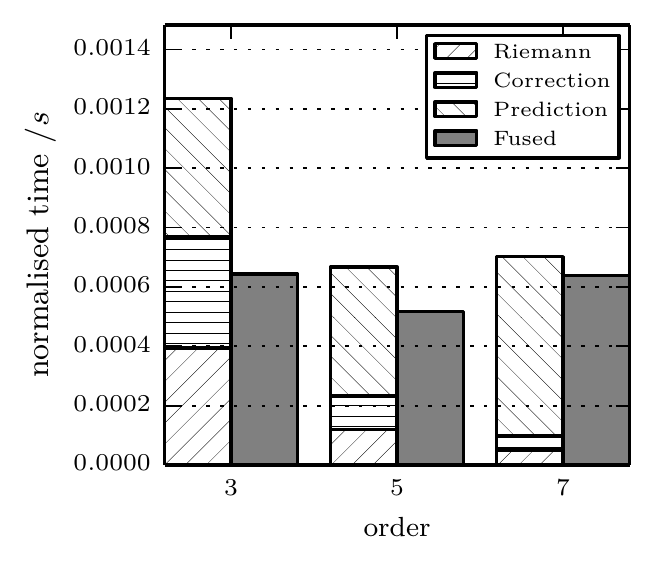}
 \caption{Time per time step for a $81 \times 81 \times 81$ mesh on KNL
 (left) and Broadwell (right). 28 ranks are used.
  The time is normalised w.r.t. the $81^3 \cdot (p+1)^3$ Legendre points of the
  solution.
  \label{figure:distributed-mem}
 }
\end{figure}

We close the experiments with distributed memory tests realised through MPI. 
Shared memory parallelisation is disabled to avoid that different 
runtime effects interfere.
All experiments are chosen such that the analysis covers only simulation spans
where the work is reasonably good balanced.
No ill-balancing pollutes the data.
Finally, the setup is chosen such that no reruns are observed.
Fused time stepping continues to pay off throughout all of our experiments
(cmp.~exemplary Fig.~\ref{figure:distributed-mem}).
The pay off however is limited on the manycore.
Effectively, the measured time for the fused approach equals the time required
for the sole \texttt{STP} grid sweep.
Given the massive growth of \texttt{STP} cost for increasing polynomial degrees,
fusion's impact is more significant for small $p$.

ADER-DG's prediction triggers data exchange through the cell faces.
This data has to arrive for the subsequent Riemann solves whereas no data
transfer is required when we progress from the Riemann solutions to the
correction step.
All results suggest that all the data exchange is successfully hidden behind the
computation.
The predictor dominates the parallel runtime, but almost
all data exchange happens in the background of this \texttt{STP} as we do not
see a significant increase of the \texttt{Riemann} timings.
The improvement characteristics of our fused, communication-avoiding ADER-DG
variant are preserved for distributed memory runs.

Two insights put this success into perspective:
The hiding of communication applies for the non-fused
realisation, too.
It is solely the immediate firing of Riemann input data by the \texttt{STP}
tasks that makes the data transfers hiding behind computations.
The other techniques have no positive impact.
Instead, only lower $p$ orders yield $T_{fused} + T_{STP} < T_{3\
steps}$.
The optimistic time stepping remains not without risk for parallel runs,
i.e.~have to be avoided for higher polynomial order.

\section{Conclusion}
\label{section:conclusion}

Our manuscript details the implementation of an ADER-DG code that exhibits
communication-avoiding characteristics.
For this, we generalise and detailed the term communication-avoiding
itself.
Our experimental data with ADER-DG employing a Finite Volume limiter plus our
analyses suggest that the techniques pay off.
The techniques or variations of them apply and are of use for a vast range of
predictor-corrector schemes in general.

There are two natural extensions of the present research.
They are subject to ongoing work.
On the one hand, we have to demonstrate the impact and usefulness of the
presented techniques for real-world simulation challenges.
This is part of the ExaHyPE \cite{Software:ExaHyPE} agenda.
On the other hand, we have to detail how grid layouts and implicitly
generated task graphs as well as grid traversals and implicitly defined task
graph processing interplay.
This is important for real-world scaling.
Our results suggest that tasks aligning along subdomain boundaries in an MPI
context are critical, i.e.~have to be handled as soon as possible.
Prioritised processing allows the
corresponding Riemann input data to squeeze through the network while further
tasks are processed.
Also, we do present MPI results for given setups which illustrate the
effectiveness of the proposed solutions.
Appropriate load balancing has been enforced manually.
In practice, the quality of the load balancing in context with the tasking will
determine the achieved performance.

\section*{Acknowledgements}

The authors appreciate support received from the European Union’s Horizon 2020
research and innovation programme under grant agreement No 671698 (ExaHyPE).
This work made use of the facilities of the Hamilton HPC Service of Durham
University, and it used the ARCHER UK National Supercomputing Service
(http://www.archer.ac.uk), i.e.~its KNL nodes.
Thanks are due to all members of the ExaHyPE consortium who made this research
possible.
Particular assistance with the numerical scheme has been provided by Michael
Dumbser.
%
All underlying software is open source \cite{Software:ExaHyPE}.

\bibliographystyle{siam}
\bibliography{efficient-aderdg}{}

\begin{thebibliography}{10}

\bibitem{bader_space_filling_2013}
{\sc M.~Bader}, {\em Space-{Filling} {Curves}}, vol.~9 of Texts in
  {Computational} {Science} and {Engineering}, Springer Berlin Heidelberg,
  Berlin, Heidelberg, 2013.

\bibitem{Software:ExaHyPE}
{\sc M.~Bader, M.~Dumbser, A.~Gabriel, H.~Igel, L.~Rezzolla, and T.~Weinzierl},
  {\em {ExaHyPE}---an {E}xascale {H}yperbolic {PDE} solver {E}ngine}, 2017.

\bibitem{Ballard:14:CommunicationAvoidingFac}
{\sc G.~Ballard, D.~Becker, et~al.}, {\em Communication-avoiding
  symmetric-indefinite factorization}, {SIAM} J. Matrix Analysis Applications,
  35 (2014), pp.~1364--1406.

\bibitem{bungartz_pde_2010}
{\sc H.-J. Bungartz, M.~Mehl, T.~Neckel, and T.~Weinzierl}, {\em The {PDE}
  framework {Peano} applied to fluid dynamics: an efficient implementation of a
  parallel multiscale fluid dynamics solver on octree-like adaptive {Cartesian}
  grids}, Computational Mechanics, 46 (2010), pp.~103--114.

\bibitem{Demmel:12:CommunicationAvoiding}
{\sc J.~Demmel}, {\em Communication avoiding algorithms}, in 2012 SC Companion:
  High Performance Computing, Networking Storage and Analysis, 2012,
  pp.~1942--2000.
\newblock (presentation).

\bibitem{Dongarra:14:ApplMathExascaleComputing}
{\sc J.~Dongarra, J.~Hittinger, et~al.}, {\em {A}pplied {M}athematics
  {R}esearch for {E}xascale {C}omputing}, 2014.
\newblock DOE ASCR Exascale Mathematics Working Group:
  http://www.netlib.org/utk/people/JackDongarra/PAPERS/doe-exascale-math-report.pdf.

\bibitem{Dumbser:06:ADERDG}
{\sc M.~Dumbser and M.~K\"{a}ser}, {\em An arbitrary high-order discontinuous
  {Galerkin} method for elastic waves on unstructured meshes - {II}. {The}
  three-dimensional isotropic case}, Geophysical Journal International, 167
  (2006), pp.~319--336.

\bibitem{Dumbser:14:Posteriori}
{\sc M.~Dumbser, O.~Zanotti, R.~Loubère, and S.~Diot}, {\em A posteriori
  subcell limiting of the discontinuous {Galerkin} finite element method for
  hyperbolic conservation laws}, Journal of Computational Physics, 278 (2014),
  pp.~47--75.

\bibitem{Eckhardt:16:SPH}
{\sc W.~Eckhardt, R.~Glas, D.~Korzh, S.~Wallner, and T.~Weinzierl}, {\em
  On-the-fly memory compression for multibody algorithms}, in Advances in
  Parallel Computing 27: International Conference on Parallel Computing (ParCo)
  2015, G.~Joubert, H.~Leather, M.~Parsons, F.~Peters, and M.~Sawyer, eds.,
  vol.~27, IOS Press, 2016, pp.~421--430.

\bibitem{Gassner:11:ExplicitDG}
{\sc G.~Gassner, M.~Dumbser, F.~Hindenlang, and C.-D. Munz}, {\em Explicit
  one-step time discretizations for discontinuous {Galerkin} and finite volume
  schemes based on local predictors}, Journal of Computational Physics, 230
  (2011), pp.~4232--4247.

\bibitem{Ghysels:13:HideLatency}
{\sc P.~Ghysels, T.~Ashby, K.~Meerbergen, and W.~Vanroose}, {\em Hiding global
  communication latency in the gmres algorithm on massively parallel machines},
  SIAM Journal on Scientific Computing, 35 (2013), pp.~C48--C71.

\bibitem{Hesthaven:08:NodalDG}
{\sc J.~Hesthaven and T.~Warburton}, {\em Nodal discontinuous {Galerkin}
  methods: algorithms, analysis, and applications}, no.~54 in Texts in applied
  mathematics, Springer, 2008.

\bibitem{Jabbar:16:CommunicationAvoiding}
{\sc M.~Jabbar, G.~Markomanolis, H.~Ibeid, R.~Yokota, and D.~Keyes}, {\em
  Communication reducing algorithms for distributed hierarchical n-body
  problems with boundary distributions}, in High Performance Computing - 32nd
  International Conference, {ISC} High Performance 2017, Frankfurt, Germany,
  June 18-22, 2017, Proceedings, J.~Kunkel, R.~Yokota, P.~Balaji, and D.~Keyes,
  eds., vol.~10266 of Lecture Notes in Computer Science, 2017, pp.~79--96.

\bibitem{Kowarschik:03:CacheOverview}
{\sc M.~Kowarschik and C.~Weiß}, {\em An overview of cache optimization
  techniques and cache-aware numerical algorithms}, in Algorithms for Memory
  Hierarchies — Advanced Lectures, volume 2625 of Lecture Notes in Computer
  Science, Springer, 2003, pp.~213--232.

\bibitem{McCalpin:95:Stream}
{\sc J.~McCalpin}, {\em Memory bandwidth and machine balance in current high
  performance computers}, IEEE Computer Society Technical Committee on Computer
  Architecture (TCCA) Newsletter,  (1995), pp.~19--25.

\bibitem{Reinders:07:TBB}
{\sc J.~Reinders}, {\em Intel Threading Building Blocks}, O'Reilly, 2007.

\bibitem{Reps:15:Helmholtz}
{\sc B.~Reps and T.~Weinzierl}, {\em A complex additive geometric multigrid
  solver for the helmholtz equations on spacetrees}, ACM Transactions on
  Mathematical Software,  (2016).
\newblock accepted.

\bibitem{Sundar:08:BalancedOctrees}
{\sc H.~Sundar, R.~S. Sampath, and G.~Biros}, {\em Bottom-up construction and
  2:1 balance refinement of linear octrees in parallel}, SIAM Journal on
  Scientific Computing, 30 (2008), pp.~2675--2708.

\bibitem{Titarev:02:ADER}
{\sc V.~A. Titarev and E.~F. Toro}, {\em {ADER}: {Arbitrary} high order
  {Godunov} approach}, Journal of Scientific Computing, 17 (2002),
  pp.~609--618.

\bibitem{Treibig:10:Likwid}
{\sc J.~Treibig, G.~Hager, and G.~Wellein}, {\em {LIKWID}: {A} {L}ightweight
  {P}erformance-{O}riented {T}ool {S}uite for x86 {M}ulticore {E}nvironments},
  in Proceedings of the 2010 39th International Conference on Parallel
  Processing Workshops, ICPPW '10, IEEE Computer Society, 2010, pp.~207--216.

\bibitem{Weinzierl:17:BoxMG}
{\sc M.~Weinzierl and T.~Weinzierl}, {\em Quasi-matrix-free hybrid multigrid on
  dynamically adaptive cartesian grids}, ACM Transactions on Mathematical
  Software, pp.~2732--2760.
\newblock (submitted; arXiv:1607.00648).

\bibitem{Weinzierl:17:Peano}
{\sc T.~Weinzierl}, {\em {The Peano software---parallel, automaton-based,
  dynamically adaptive grid traversals}}, ACM Transactions on Mathematical
  Software,  (2017).
\newblock (submitted; arXiv:1506.04496).

\bibitem{Weinzierl:11:Peano}
{\sc T.~Weinzierl and M.~Mehl}, {\em {Peano---A Traversal and Storage Scheme
  for Octree-Like Adaptive Cartesian Multiscale Grids}}, SIAM J. Sci. Comput.,
  33 (2011), pp.~2732--2760.

\bibitem{YarKhan:17:PlasmaOpenMP}
{\sc A.~YarKhan, J.~Kurzak, P.~Luszczek, and J.~Dongarra}, {\em Porting the
  plasma numerical library to the openmp standard}, International Journal of
  Parallel Programming, 45 (2017), pp.~612--633.

\bibitem{Zanotti::15::SpaceTimeLimitingADERDG}
{\sc O.~Zanotti, F.~Fambri, M.~Dumbser, and A.~Hidalgo}, {\em {Space–time
  adaptive ADER discontinuous Galerkin finite element schemes with a posteriori
  sub-cell finite volume limiting}}, {Computers \& Fluids}, 118 (2015), pp.~204
  -- 224.

\end{thebibliography}

\appendix
\section{Finite Volume schemes within the predictor-corrector formalism}
\label{appendix:aderdg}

\begin{corollary}
 \label{corollary:FV-Rusanov}
Finite Volumes with Rusanov fluxes can be cast into a predictor-corrector
scheme. 
\end{corollary}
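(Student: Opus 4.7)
The plan is to exhibit a direct identification between the first-order Finite Volume scheme with Rusanov fluxes and the degenerate $p = 0$ specialisation of the ADER-DG predictor--corrector machinery set up in Section~\ref{section:aderdg}, and then check that each task appearing in (\ref{equation:operator-multiple-evaluations}) either collapses to the identity or reproduces exactly a step of the classical FV update. The argument is constructive and, like the proof of Corollary~\ref{corollary:FV}, mostly bookkeeping.

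First, I would identify each FV cell mean $\bar{Q}_c$ with the sole coefficient of a $p = 0$ tensor-product Lagrange expansion on $c$, i.e.~one degree of freedom per component. The task $\texttt{predict}$ is then trivial: the space-time predictor $Q_h^*(x,t) \equiv \bar{Q}_c$ for $(x,t) \in c \times (T, T + \Delta T)$ satisfies (\ref{equation:aderdg:stp}) without any Picard iteration because $\partial_t Q_h^* = 0$ and $\nabla\cdot F(Q_h^*) = 0$ when tested against the constant basis. The task $\texttt{extrapolate}$ simply copies $\bar{Q}_c$ and $F(\bar{Q}_c)\cdot n$ to each of the $2d$ faces as time-constant traces.

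Next, I would verify that $\texttt{solveRiemann}$, fed the constant left/right traces $Q_h^{*\pm}$ and normal fluxes $F(Q_h^*)^{\pm}\cdot n$ coming from $\texttt{extrapolate}$, emits exactly the classical Rusanov numerical flux $F^*_h = \tfrac{1}{2}(F^+ + F^-)\cdot n - \tfrac{1}{2} s_{\max}(Q^{+} - Q^{-})$ with $s_{\max}$ the maximum absolute eigenvalue at the interface. Since both inputs are time independent for $p=0$, so is the output, and this matches the textbook Rusanov flux bit for bit. On the corrector side, $\texttt{integrateVolume}$ vanishes, since the only test function has $\nabla v_h = 0$ so $\int_c \nabla v_h : F(Q_h^*) = 0$, and thus contributes nothing to $D_h$. $\texttt{integrateFace}$ collapses to $\Delta T \cdot |f| \cdot F^*_h$ summed over the faces of $c$, and $\texttt{update}$ then adds $|c|^{-1} D_h$ onto $\bar{Q}_c$; this is literally the FV Rusanov update rule.

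The main obstacle, rather than a deep analytical step, will be keeping the degenerate formalism internally consistent. I have to confirm that at $p = 0$ the mass, extrapolation and lifting operators appearing in the ADER-DG formulation are well defined and non-singular (the mass matrix is the scalar $|c|$, extrapolation is the identity, lifting is scaling by $|f|$), and that the memory footprints in Table~\ref{table:memory-requirements} collapse to the expected FV ones (one mean per cell, one flux value per face, no persistent space-time interior storage). Once these specialisations are spelled out, every communication-avoiding technique of Section~\ref{section:tasks} and every result of Section~\ref{section:properties} applies verbatim to the Rusanov FV limiter, which is precisely the reason for stating the corollary in this form.
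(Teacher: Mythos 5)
Your proof is correct and follows essentially the same route as the paper's: set $p=0$, observe that the constant space-time predictor solves (\ref{equation:aderdg:stp}) trivially, map the cell mean and its flux to the faces via \texttt{extrapolate}, let \texttt{solveRiemann} produce the Rusanov flux, and note that both integration tasks reduce to multiplication by $\Delta T$ (with \texttt{integrateVolume} actually vanishing since $\nabla v_h = 0$). You spell out the task-by-task degeneracies in more detail than the paper does, and your Rusanov flux $\tfrac{1}{2}(F^+ + F^-)\cdot n - \tfrac{1}{2} s_{\max}(Q^{+} - Q^{-})$ is the standard form, whereas the paper's displayed expression $\tfrac{1}{2}(Q^{+}_h+Q^{-}_h)+\alpha(F^{+}_h-F^{-}_h)$ appears to transpose the roles of $Q$ and $F$, so if anything your version is the more faithful one.
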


\begin{proof}
Let $p=0$ and let the predictor be the identity, i.e.~the predicted solution
$Q^*_h(T+\Delta T) = Q_h(T)$: 
We make \texttt{extrapolate} write $Q^*_h$ and a $F(Q^*_h)$ onto the $2d$
faces of each cell and write $id_{c \mapsto \partial c}$
With this information, \texttt{Riemann} determines
$\frac{1}{2}(Q^{+}_h+Q^{-}_h)+\alpha \cdot \left(F^{+}_h-F^{-}_h \right)$.
We write \texttt{solveRiemann}=\texttt{computeRusanovFluxes}.
As $p=0$, both integration tasks multiply their input with $\Delta T$,
i.e.~(\ref{equation:operator}) becomes
\[
  Q_h (T+\Delta T) =  
   \left(id +
   \Delta T \cdot
  \texttt{computeRusanovFluxes}
  \circ id_{c \mapsto \partial c} 
  \right)
  Q_h (T). 
\]
\end{proof}

\noindent
The result is not surprising as we constructed our ADER-DG data structures to
support (space-time) Rusanov solvers.
Yet, the corollary illustrates how the limiter is to be realised:
Here we embed $(2d+1)^d$ patches into each ADER-DG cell.
The subdivision, i.e.~the patches, compensate for the high order.
As patches are equivalent to $p=0$ ADER-DG on a subgrid, the faces of our
limiter hold
$(2p+1)^{d-1}$ $Q$/$F$-tupels as opposed to $(2p+1)^{d}$ space-time entries.
The Finite Volume scheme has lower total memory demands, but otherwise allows
for exactly the same data flow and algorithmic step paradigms.

\begin{corollary}
Finite Volumes with a MUSCL-Hancock 
Riemann solvers can be cast into our ADER-DG predictor-corrector
formalism if we (i) restrict to tensor-product style operators, i.e.~neglect
interaction of a Finite Volume cell with neighbours that are not face-connected,
and (ii) rely on patches of at least two Finite Volume cells per coordinate axis per
cell.
\end{corollary}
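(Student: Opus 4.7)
The plan is to mirror the construction from Corollary~\ref{corollary:FV-Rusanov}, replacing only the identity predictor by the two-stage MUSCL reconstruction plus Hancock half-time-step evolution. I would set $p=0$ for the ADER-DG polynomial order and embed a regular subgrid of $N^d$ Finite Volume cells per ADER-DG cell with $N\geq 2$. The tasks are then instantiated as follows: \texttt{predict} runs, per FV cell of the patch, a slope-limited dimension-by-dimension reconstruction followed by the Hancock half-step evolution that uses only that FV cell's own reconstructed face values; \texttt{extrapolate} writes onto each of the $2d$ ADER-DG faces both the boundary FV cell averages (needed by the neighbour's slope reconstruction in the next time step) and the half-step left/right states at all FV sub-faces lying on that face; \texttt{solveRiemann} applies the sub-face Riemann solver across patch boundaries using those half-step states; the $p=0$ integration tasks reduce to the same multiplication by $\Delta T$ as in Corollary~\ref{corollary:FV-Rusanov}, accumulated into $D_h$.

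The key verification is that \texttt{predict} remains element-wise in the sense of Sect.~\ref{section:theorems}. For an FV cell strictly interior to its patch (which exists only for $N\geq 3$) the slope stencil is patch-internal; for an FV cell adjacent to a patch boundary the stencil reaches exactly one FV cell into the neighbouring patch along each affected axis, and that cell's average has been deposited on the shared face by the neighbour's preceding \texttt{extrapolate}. This is where condition~(i) becomes indispensable: if the MUSCL stencil required edge- or vertex-adjacent cells, the needed averages would never be delivered through the purely face-based data exchange on which the formalism of Sect.~\ref{section:aderdg} rests. A tensor-product, dimension-by-dimension reconstruction keeps the stencil face-local and therefore stays inside the ADER-DG face abstraction.

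Condition~(ii) then guarantees that a single layer of FV neighbour averages per face suffices to close the predictor. For $N\geq 2$ every FV stencil is covered by patch-local data plus at most one FV cell from a single adjacent patch per axis, so the space-time hull of Theorem~\ref{theorem:memory} and Corollary~\ref{corollary:hull} only has to be augmented by that single layer of averages without changing the face-data topology. I would make this explicit by enumerating, for a representative patch, the source of every stencil entry of every contained FV cell and certifying its availability either patch-internally or from exactly one adjacent face.

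The main obstacle will be bookkeeping rather than any deep calculation: one has to show that the enlarged face data structure, now carrying both neighbour-patch boundary averages and Hancock half-step states, still respects the temporal partial order~(\ref{equation:tasking:dependencies}) and that the two kinds of face fields coexist inside a single Corollary~\ref{corollary:hull} hull without one overwriting the other. Once the face layout is fixed and the stencil availability argument above is completed, substituting the MUSCL-Hancock update into the operator composition~(\ref{equation:operator}) closes the proof exactly as in the Rusanov case.
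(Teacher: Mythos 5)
Your construction places the slope-limited reconstruction and the Hancock half-step inside \texttt{predict}, whereas the paper keeps the predictor trivial (constant-in-time extrapolation, as for Rusanov) and pushes the entire MUSCL-Hancock machinery into \texttt{solveRiemann}, after depositing gradients (equivalently, neighbour-of-neighbour cell averages) in the $F$-slot of the face via \texttt{extrapolate}. This is not a cosmetic difference; your variant has a genuine ordering gap.

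For a patch-boundary FV cell, the slope stencil reaches one FV cell into the adjacent patch, and you claim this average ``has been deposited on the shared face by the neighbour's preceding \texttt{extrapolate}.'' But the formalism's partial order (\ref{equation:tasking:dependencies}) only enforces $\texttt{STP}(c_a)\before\texttt{Riemann}(c_a,c_b)$ and $\texttt{Riemann}(c_a,c_b)\before\texttt{Corrector}(c_a)$; it imposes \emph{no} order between $\texttt{STP}(c_a)$ and $\texttt{STP}(c_b)$ for face-connected $c_a,c_b$. If you require $\texttt{extrapolate}(c_b)\before\texttt{predict}(c_a)$, the symmetric requirement $\texttt{extrapolate}(c_a)\before\texttt{predict}(c_b)$ holds too, and you have a cycle. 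The only escape is to read whatever is already on the face, which---depending on traversal order and on whether you consider Alg.~\ref{algorithm:time-stepping} shifted or not---is the neighbour's boundary average from the \emph{previous} sweep, i.e.\ one corrector update stale relative to the patch-interior data you combine it with. The resulting scheme is traversal-order-dependent and is no longer the MUSCL-Hancock update you set out to cast, so the ``substituting the MUSCL-Hancock update into (\ref{equation:operator})'' closing step does not go through. The paper sidesteps this precisely because $\texttt{solveRiemann}$ is the one task that is guaranteed, by (\ref{equation:tasking:dependencies}), to execute after \emph{both} cells' predictors; hence condition (ii) exists to ensure a one-sided patch-internal gradient can be formed in \texttt{extrapolate} and shipped to the face, and the boundary-cell half-step is then reconstructed inside \texttt{Riemann} where both sides' data are legitimately available. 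Your conditions (i) and (ii) are motivated correctly, but they are attached to the wrong task; move the boundary reconstruction into \texttt{solveRiemann} and keep \texttt{predict} element-wise in the strict sense, and the proof closes.
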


\begin{proof}
We preserve the constant extrapolation in time in the predictor from Corollary
\ref{corollary:FV-Rusanov}.
While we project $Q^*_h=Q_h$ onto the face, we do not map the corresponding flux
onto the face as we observe that $Q^*_h=Q_h$ implies that these fluxes hold
redundant information anyway.
Instead we project the gradients (weighted difference) of $Q_h(T)$ to the faces.

With the second entry on the face, i.e.~the $F$ entry, holding 
$(\nabla Q_h^*,n)^{-}$ and $(\nabla Q_h^*,n)^{+}$, we can reconstruct
the $Q$ values of neighbour of neighbour cells along a coordinate axis.
Indeed, it is reasonable to technically store those values directly in $F$
instead of gradients along a normal.
Knowing neighbours of neighbours allows us to realise MUSCL-Hancock's internal
predition for half time steps with the \texttt{Riemann} task.
The overall scheme follows the proof of Corollary
\ref{corollary:FV-Rusanov} with
\texttt{solveRiemann}=\texttt{computeMUSCLHancockFlux}.
\end{proof}

\section{Proof of Corollary \ref{corollary:naive-element-wise-single-touch}}
\label{appendix:theorems}

\begin{proof}
ADER-DG's minimalist version (without limiter and adaptive mesh refinement,
e.g.) runs three steps per time step which decompose into individual tasks.
\texttt{STP} and \texttt{Corrector} are defined on one cell, while
\texttt{Riemann} accepts input data from two cells.
Explicit time stepping means that ADER-DG runs a sequence of steps 
$\texttt{STP} \before \texttt{Riemann} \before \texttt{Corrector} \before
\texttt{STP} \before \texttt{Riemann} \before \ldots$.
Element-wise single-touch means that there is a splitting of this sequence
into chunks of three tasks such that the data $Q_h$ of any cell is read and written
only once per task triad.
We assume that such a splitting exists.
The periodicity of the sequence implies that we have to analyse three variants
to find it.

We first assume that we can split up ADER-DG into $(\texttt{STP} \before
\texttt{Riemann} \before \texttt{Corrector})^+$ step sequences.
Let $c_a$ be subject to \texttt{STP}. 
We have to assume that it yields the input to \texttt{Riemann} for one
face while the data from the face-connected neighbouring cell $c_b$ is already
available. 
Consequently $c_b \before c_a$. 
To ensure a single touch, \texttt{Corrector} on $c_a$ follows immediately to the
Riemann solve.
This means the code is not single-touch for $c_b$.

We second assume that we can split up ADER-DG into $\texttt{STP} \before 
(\texttt{Riemann} \before \texttt{Corrector} \before \texttt{STP})^+$ step
sequences.
The single touch constraint here is rewritten from a cell-based notion into a 
face-based notion, i.e.~face information is used to keep multiple time steps
consistent.
To ensure single touch, we have for any cell $c_a$ to run
\texttt{STP} directly after \texttt{Corrector}. 
\texttt{Corrector} comprises a task \texttt{calcTimeStepSize} (Table
\ref{table:memory-requirements}).
Without further assumptions on the time stepping scheme, i.e.~that the solution
evolves smoothly and $\Delta T_{adm}$ thus is continuously increasing, a cell
$c_b$ with $\texttt{STP}(c_a) \before \texttt{STP}(c_b)$ might reduce the
admissible time step used for $c_a$. 
We thus have to recompute $\texttt{STP}(c_a)$ and are not single touch anymore.

We finally assume that we can split up ADER-DG into $\texttt{STP} \before
\texttt{Riemann} \before (\texttt{Corrector} \before \texttt{STP} \before
\texttt{Riemann})^+$ step sequences.
While this is indeed single-touch w.r.t.~ cell data, the previous argument on
the exchange of admissible time step sizes $\Delta T_{adm}$ continues to hold.

Our assumption that there is a single-touch element-wise algorithm has been
wrong.

\end{proof}
\section{Technical details on the properties}

\paragraph{ Proof for Corollary \ref{corollary:hull} }

A straightforward implementation of ADER-DG exhibits at least the following
persistent memory footprint per cell following Table
\ref{table:memory-requirements}.
\texttt{STP} works on the following data cardinalities:
\[
 \underbrace{
  m(p+1)^d
 }_{
  Q_h
 }
 +
 \underbrace{
  (d+1)m(p+1)^{d+1}
 }_{
  \mbox{from} \ \texttt{predict}
 }
 +
 \underbrace{
  4dm(p+1)^d
 }_{
  \mbox{from} \ \texttt{extrapolate}
 }
\]

\noindent
The Riemann solve yields $F^*_h$ over space-time though space here refers to
cell faces only.
The result is to be stored in a separate data structure.
Again, all quantities are normalised w.r.t.~cell count:

\[
 \underbrace{
   2dm(p+1)^d
 }_{
  \mbox{from} \ \texttt{solveRiemann}
 }
\]

\noindent
The actual update can be performed in-situ in $Q_h$. 
Time step size computations do not increase the memory footprint which totals in
\[
  (d+1)m(p+1)^{d+1} + (1+6d)m(p+1)^d.
\]

\noindent
The present paper's variant induces, according to Table
\ref{table:memory-requirements}, a memory footprint of
\[
 \underbrace{
  2m(p+1)^d
 }_{
  Q_h\ \mbox{and}\ D_h
 }
 +
 \underbrace{
  4dm(p+1)^d
 }_{
  \mbox{from} \ \texttt{extrapolate}
 }
 +
 \underbrace{
   2dm(p+1)^d
 }_{
  \mbox{from} \ \texttt{solveRiemann}
 }
 = 
 (2+6d)m(p+1)^d.
\]

\noindent
We obtain a relative memory footprint of our implementation compared to a
straightforward, 3-step code of

\begin{center}
 \begin{tabular}{r|rr}
 $p$ & $d=2$ & $d=3$ \\
 \hline
 2 & 0.64 & 0.65 \\
 3 & 0.56 & 0.57 \\
 4 & 0.50 & 0.51 \\
 5 & 0.45 & 0.47 \\
 6 & 0.41 & 0.43 \\
 7 & 0.38 & 0.39 \\
 8 & 0.35 & 0.36 \\
 9 & 0.33 & 0.34
 \end{tabular}
\end{center}

\paragraph{ Remarks on Corollary \ref{corollary:concurrency-homogenisation}}

The concurrency models in Corollary \ref{corollary:concurrency-homogenisation}
lack the fact that all three phases (in the straightforward variant) or the
fused approach, respectively, contain tiny serial fragments plus startup cost. 
These fragments materialise in limited scalability in our results.

The inequality itself follows trivially from
\begin{eqnarray*}
  d + (1-d) \frac{ T_\textnormal{STP} }{ T_\textnormal{3\ steps} } \Rightarrow 
  2 - \frac{ T_\textnormal{STP} }{ T_\textnormal{3\ steps} } & \geq & 1 
  \qquad \mbox{for}\ d = 2,
  \\
  3 - 2\frac{ T_\textnormal{STP} }{ T_\textnormal{3\ steps} } & \geq & 1
  \qquad \mbox{for}\ d = 3
\end{eqnarray*}

\end{document}